\documentclass[oneside,english]{amsart}
\usepackage[T1]{fontenc}
\usepackage[latin9]{inputenc}
\usepackage{varioref}
\usepackage{amsthm}
\usepackage{amssymb}

\makeatletter
\numberwithin{equation}{section} 
\numberwithin{figure}{section} 
\theoremstyle{plain}
\theoremstyle{plain}
\newtheorem{thm}{Theorem}
  \theoremstyle{plain}
  \newtheorem{lem}[thm]{Lemma}
  \theoremstyle{plain}
  \newtheorem{prop}[thm]{Proposition}
  \theoremstyle{remark}
  \newtheorem*{note*}{Note}
  \theoremstyle{remark}
  \newtheorem*{conclusion*}{Conclusion}
  \theoremstyle{remark}
  \newtheorem{note}[thm]{Note}
 \theoremstyle{definition}
  
  \theoremstyle{plain}
  
\theoremstyle{definition}
\newtheorem{defn}[thm]{Definition}
\newtheorem{rem}{Remark}

\usepackage{amsthm}
\usepackage{amsfonts}
\usepackage{amscd}
\usepackage{pb-diagram}





\newcommand{\dif}{\textrm{\textbf{d}}}

\newcommand{\cI}{{\mathcal I}}

\makeatother

\usepackage{babel}

\begin{document}

\title{AKS systems and Lepage equivalent problems}

\author{S. Capriotti}

\maketitle

\begin{abstract}
The integrable systems known as ``AKS systems'' admit a natural formulation in terms of a Hamiltonian picture. The Lagrangian side of these systems are far less known; a version in these terms can be found in \cite{feher-2002-301}. The purpose of these notes in to provide a novel description of AKS systems in terms of a variational system more general than usual in mechanics. Additionally, and using techniques borrowed from \cite{GotayCartan}, it was possible to build the Hamiltonian side of this variational problem, allowing us to establish the equivalence with the usual approach to these integrable systems.
\end{abstract}

\tableofcontents

\section{Introduction}

Any (regular) Lagrangian dynamical system has a Hamiltonian counterpart. The usual prescription for the construction of this alternate system relies heavily in the notion of Legrendre transformation. On the other side, the integrable systems known as AKS dynamical systems are described in the Hamiltonian side, and some interesting Lagrangian versions has been found recently \cite{feher-2002-301}. The following article describes a kind of Lagrangian setting for the so-called AKS systems. The main theoretical weapons used in the work are:
\begin{itemize}
\item The notion of \emph{Lepage equivalent variational problems} \cite{GotayCartan}.
\item An alternate description of mechanics known as \emph{non standard mechanics}, described in \cite{SantiJGM}.
\end{itemize}
We will set a non standard variational problem, inspired in the work of Feher \emph{et al.} cited above, and use the canonical Lepage equivalent of this variational problem in order to provide an ``almost Lagrangian'' version of AKS systems. 

\section{A brief summary of AKS dynamics}

In the following section we will define what we say when we say ``AKS systems''; the scheme presented here was adapted, with minor changes, from joint work of the author with H. Montani \cite{capriotti10:_dirac_lie}.

\subsection{Symmetries of a factorizable Lie group}
Let us begin by considering $G\times\mathfrak{g}^*$ as a $A\times B$-space, if as above $G\times\mathfrak{g}^*\simeq T^*G$ via left trivialization and we lift the $A\times B$-action on $G$ given by
\[
A\times B\times G\rightarrow G:\left(a,b;g\right)\mapsto agb^{-1}.
\]
By using the facts that the action is lifted and the symplectic form on $G\times\mathfrak{g}^*$ is exact, we can determine the momentum map associated to this action; then we obtain that
\begin{align}
J:&G\times\mathfrak{g}^*\rightarrow\mathfrak{b}^0\times\mathfrak{a}^0\cr
&\left(g,\sigma\right)\mapsto\left(\pi_{\mathfrak{b}^0}\left(\mbox{Ad}^\sharp_g\sigma\right),\pi_{\mathfrak{a}^0}\left(\sigma\right)\right)\label{ExprABMomentumMap}
\end{align}
where $\mbox{Ad}^{\sharp}$ indicates the coadjoint \emph{action} of $G$ on $\mathfrak{g}^*$. Let us now define the submanifold
\[
\Lambda_{\mu\nu}:=\left\{\left(g,\sigma\right)\in G\times\mathfrak{g}^*:\pi_{\mathfrak{b}^0}\left(\mbox{Ad}^\sharp_g\sigma\right)=\mu,\pi_{\mathfrak{a}^0}\left(\sigma\right)=\nu\right\}
\]
for each pair $\mu\in\mathfrak{b}^0,\nu\in\mathfrak{a}^0$. We have the following lemma.
\begin{lem}
Let $\sigma_\pm\in\mathfrak{g}_\mp^0,a\in A,b\in B$ be arbitrary elements. Then the formulas
\begin{align*}
a\cdot\sigma_+&:=\pi_{\mathfrak{b}^0}\left(\mbox{Ad}^\sharp_a\sigma_+\right)\cr
b\cdot\sigma_-&:=\pi_{\mathfrak{a}^0}\left(\mbox{Ad}^\sharp_b\sigma_-\right)
\end{align*}
defines an action of $A$ (resp. $B$) on $\mathfrak{a}^0$ (resp. $\mathfrak{b}^0$); in fact, under the identifications
\[
\mathfrak{a}^*\simeq\mathfrak{b}^0,\mathfrak{b}^*\simeq\mathfrak{a}^0
\]
induced by the decomposition $\mathfrak{g}=\mathfrak{a}\oplus\mathfrak{b}$ these actions are nothing but the coadjoint actions of each factor on the dual of its Lie algebras.
\end{lem}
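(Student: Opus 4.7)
The plan is to verify the two assertions separately: first, that the given formulas satisfy the group action axioms, and second, that under the natural identifications $\mathfrak{a}^*\simeq\mathfrak{b}^0$ and $\mathfrak{b}^*\simeq\mathfrak{a}^0$ they coincide with the coadjoint actions. I will treat only the $A$-action, since the $B$-action is handled by a symmetric argument.

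For the action axioms, the identity axiom is immediate: since $\sigma_+\in\mathfrak{b}^0$ already, one has $e\cdot\sigma_+=\pi_{\mathfrak{b}^0}(\sigma_+)=\sigma_+$. The compositionality axiom is the substantive point, and rests on the following observation: for $a\in A$, $\mbox{Ad}_a$ preserves the subalgebra $\mathfrak{a}$, so dually $\mbox{Ad}^\sharp_a$ preserves the annihilator $\mathfrak{a}^0=\kernel(\pi_{\mathfrak{b}^0})$. Writing $\mbox{Ad}^\sharp_{a_2}\sigma_+=\pi_{\mathfrak{b}^0}(\mbox{Ad}^\sharp_{a_2}\sigma_+)+\eta$ with $\eta\in\mathfrak{a}^0$, the component $\mbox{Ad}^\sharp_{a_1}\eta$ remains in $\mathfrak{a}^0$ and is therefore killed by $\pi_{\mathfrak{b}^0}$, yielding
$$\pi_{\mathfrak{b}^0}\bigl(\mbox{Ad}^\sharp_{a_1 a_2}\sigma_+\bigr)=\pi_{\mathfrak{b}^0}\bigl(\mbox{Ad}^\sharp_{a_1}\,\pi_{\mathfrak{b}^0}(\mbox{Ad}^\sharp_{a_2}\sigma_+)\bigr),$$
which is precisely $(a_1 a_2)\cdot\sigma_+=a_1\cdot(a_2\cdot\sigma_+)$.

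For the identification with the coadjoint action, I note that the isomorphism $\mathfrak{a}^*\simeq\mathfrak{b}^0$ is realized by restriction: any $\sigma_+\in\mathfrak{b}^0$ vanishes on $\mathfrak{b}$ and so is determined by its restriction to $\mathfrak{a}$, while conversely any linear form on $\mathfrak{a}$ extends uniquely to an element of $\mathfrak{g}^*$ vanishing on $\mathfrak{b}$. Under this identification the projection $\pi_{\mathfrak{b}^0}\colon\mathfrak{g}^*\to\mathfrak{b}^0$ is simply the restriction map $\mathfrak{g}^*\to\mathfrak{a}^*$. Evaluating on $X\in\mathfrak{a}$,
$$\langle a\cdot\sigma_+,X\rangle=\langle\mbox{Ad}^\sharp_a\sigma_+,X\rangle=\langle\sigma_+,\mbox{Ad}_{a^{-1}}X\rangle,$$
and since $\mbox{Ad}_{a^{-1}}X$ again lies in $\mathfrak{a}$, the right-hand side depends only on $\sigma_+|_\mathfrak{a}$ and recovers exactly the coadjoint action of $A$ on $\mathfrak{a}^*$.

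The main obstacle, such as it is, consists in the bookkeeping of which annihilator is identified with which dual, and in recognizing that the nontrivial step is the invariance $\mbox{Ad}^\sharp_a\mathfrak{a}^0\subseteq\mathfrak{a}^0$. Once that is in hand, both claims follow by direct computation; no analytic or topological input is required beyond the fact that $A$ is a subgroup with Lie algebra $\mathfrak{a}$ (and similarly for $B$).
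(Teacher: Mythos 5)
Your proof is correct and complete. The paper itself states this lemma without proof (the whole section is adapted from the companion paper with Montani, to which the author defers), so there is no in-paper argument to compare against; your proof is the natural one and correctly isolates the single nontrivial ingredient, namely the invariance $\mbox{Ad}^\sharp_a\mathfrak{a}^0\subseteq\mathfrak{a}^0$ (dual to $\mbox{Ad}_a\mathfrak{a}\subseteq\mathfrak{a}$), which kills the cross term in the compositionality check, together with the observation that under $\mathfrak{b}^0\simeq\mathfrak{a}^*$ the projection $\pi_{\mathfrak{b}^0}$ becomes restriction to $\mathfrak{a}$. Two minor remarks. First, you have silently corrected what is evidently a typo in the statement: with $\sigma_+\in\mathfrak{b}^0$ and the projection $\pi_{\mathfrak{b}^0}$, the $A$-action lives on $\mathfrak{b}^0\simeq\mathfrak{a}^*$ (consistent with the Note following the lemma, where the $A$-orbits are said to lie in $\mathfrak{b}^0$), not on $\mathfrak{a}^0$ as the lemma literally reads; it is worth flagging this rather than fixing it tacitly. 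Second, your third paragraph already subsumes the first two: once the map is exhibited as the coadjoint action of $A$ on $\mathfrak{a}^*$ transported through a fixed linear isomorphism, the action axioms follow automatically, so the explicit verification of identity and compositionality, while correct, is redundant.
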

\begin{note}
The symbols $\mathcal{O}^{A}_{\sigma_+}$ (resp. $\mathcal{O}^{B}_{\sigma_-}$) will denote the orbit in $\mathfrak{b}^0$ (resp. $\mathfrak{a}^0$) under the actions defined in the previous lemma. Additionally, for each $\xi\in\mathfrak{g}^*$, the form $\xi^\flat\in\mathfrak{g}^*$ is given by $\xi^\flat:={B}\left(\xi,\cdot\right)$, where ${B}\left(\cdot,\cdot\right)$ is the invariant bilinear form on $\mathfrak{g}^*$ by the Killing form.
\end{note}

\subsection{AKS systems as reduced spaces}
Therefore we will have that $\Lambda_{\mu\nu}=J^{-1}\left(\mu,\nu\right)$, and taking into account the M-W reduction (see \cite{A-M}) the projection of $\Lambda_{\mu\nu}$ on $\Lambda_{\mu\nu}/A_\mu\times B_\nu$ is presymplectic, and the solution curves for the dynamical system defined there by the invariant Hamiltonian $H\left(g,\sigma\right):=\frac{1}{2}\sigma\left(\sigma^\flat\right)$ are closely related with the solution curves of the system induced in the quotient. To work with these equations, let us introduce some convenient coordinates; the map
\begin{align*}
L_{\mu\nu}:\Lambda_{\mu\nu}\rightarrow\mathcal{O}^{A}_\mu\times\mathcal{O}^{B}_\nu:\left(g,\sigma\right)&\mapsto\left(\pi_{\mathfrak{b}^0}\left(\mbox{Ad}^\sharp_{g_+^{-1}}\mu\right),\pi_{\mathfrak{a}^0}\left(\mbox{Ad}^\sharp_{g_-}\nu\right)\right)\cr
&\mapsto\left(\pi_{\mathfrak{b}^0}\left(\mbox{Ad}^\sharp_{g_-}\sigma\right),\pi_{\mathfrak{a}^0}\left(\mbox{Ad}^\sharp_{g_-}\sigma\right)\right),
\end{align*}
where $g=g_+g_-$, induces a diffeomorphism on $\Lambda_{\mu\nu}/\left(A_\mu\times B_\nu\right)$. If $\left(g,\xi;\sigma,\eta\right)$ is a tangent vector to $G\times\mathfrak{g}^*$ (all the relevant bundles are left trivialized) then the derivative of $L_{\mu\nu}$ can be written as
\begin{multline}\label{DerivadaLmunu}
\left.\left(L_{\mu\nu}\right)_*\right|_{\left(g,\sigma\right)}\left(g,\sigma;\xi,\eta\right)=\left(-\pi_{\mathfrak{b}^0}\left(\mbox{ad}^\sharp_{\xi_+}\mbox{Ad}^\sharp_{g_+^{-1}}\mu\right),\pi_{\mathfrak{a}^0}\left(\mbox{ad}^\sharp_{\xi_-}\mbox{Ad}^\sharp_{g_-}\nu\right)\right)\cr
=\left(\pi_{\mathfrak{b}^0}\left(\mbox{ad}^\sharp_{\xi_-}\mbox{Ad}^\sharp_{g_-}\sigma\right)+\pi_{\mathfrak{b}^0}\left(\mbox{Ad}^\sharp_{g_-}\eta\right),\pi_{\mathfrak{a}^0}\left(\mbox{ad}^\sharp_{\xi_-}\mbox{Ad}^\sharp_{g_-}\sigma\right)+\pi_{\mathfrak{a}^0}\left(\mbox{Ad}^\sharp_{g_-}\eta\right)\right)
\end{multline}
if and only if $g=g_+g_-,\xi_+=\pi_{\mathfrak{a}}\left(\mbox{Ad}_{g_-}\xi\right),\xi_-=\pi_{\mathfrak{b}}\left(\mbox{Ad}_{g_-}\xi\right)$. So the following remarkable result is true.

\begin{prop}
Let $\mathcal{O}^{A}_\mu\times\mathcal{O}^{B}_\nu$ be the phase space whose symplectic structure is $\omega_{\mu\nu}=\omega_\mu-\omega_\nu$, where $\omega_{\mu,\nu}$ are the corresponding Kirillov-Kostant symplectic structures on each orbit. If $i_{\mu\nu}:\Lambda_{\mu\nu}\hookrightarrow G\times\mathfrak{g}^*$ is the inclusion map, then we will have that
\[
i_{\mu\nu}^*\omega=L_{\mu\nu}^*\omega_{\mu\nu}.
\]
\end{prop}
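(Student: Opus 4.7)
The plan is to evaluate both sides of the identity on a pair of tangent vectors to $\Lambda_{\mu\nu}$ and match the two resulting expressions in ``reduced'' coordinates built from the $A$--$B$ splitting and the element $\alpha:=\mbox{Ad}^\sharp_{g_-}\sigma$. I would first record the canonical symplectic form on $T^*G\simeq G\times\mathfrak{g}^*$ in left trivialization,
\[
\omega_{(g,\sigma)}\bigl((\xi_1,\eta_1),(\xi_2,\eta_2)\bigr) = \eta_1(\xi_2) - \eta_2(\xi_1) - \sigma\bigl([\xi_1,\xi_2]\bigr),
\]
and describe $T_{(g,\sigma)}\Lambda_{\mu\nu}$ by differentiating the defining equations: one finds $\eta\in\mathfrak{b}^0$ together with $\pi_{\mathfrak{b}^0}\bigl(\mbox{Ad}^\sharp_g(\eta + \mbox{ad}^\sharp_\xi\sigma)\bigr)=0$. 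Rewriting the latter by transporting along $\mbox{Ad}^\sharp_{g_-}$ (using that $A$ preserves $\mathfrak{a}$ so that $\mbox{Ad}^\sharp_{g_+}$ can be removed from the $\mathfrak{a}^0$-containment statement, and the standard intertwining $\mbox{Ad}^\sharp_{g_-}\mbox{ad}^\sharp_\xi = \mbox{ad}^\sharp_{\mbox{Ad}_{g_-}\xi}\mbox{Ad}^\sharp_{g_-}$) yields the key identity
\[
\mbox{Ad}^\sharp_{g_-}\eta = -\pi_{\mathfrak{b}^0}\bigl(\mbox{ad}^\sharp_{\xi_+}\alpha + \mbox{ad}^\sharp_{\xi_-}\alpha\bigr),
\]
valid for every tangent vector of $\Lambda_{\mu\nu}$.

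Next I would exploit formula \eqref{DerivadaLmunu}: on $\Lambda_{\mu\nu}$ both of its equivalent forms identify $(L_{\mu\nu})_*(\xi,\eta)$ with the pair of fundamental vector fields $\bigl(-\xi_+\cdot\lambda,\,\xi_-\cdot\rho\bigr)$ on the orbits, where $\lambda = \pi_{\mathfrak{b}^0}(\mbox{Ad}^\sharp_{g_-}\sigma)\in\mathcal{O}^A_\mu$ and $\rho = \pi_{\mathfrak{a}^0}(\mbox{Ad}^\sharp_{g_-}\sigma)\in\mathcal{O}^B_\nu$. Inserting this into the Kirillov--Kostant formula $\omega_\mu(X\cdot\lambda,Y\cdot\lambda) = \lambda([X,Y])$ on $\mathcal{O}^A_\mu$ (and its analogue on $\mathcal{O}^B_\nu$), and noting that $[\xi_+^{(1)},\xi_+^{(2)}]\in\mathfrak{a}$ so that $\lambda([\xi_+^{(1)},\xi_+^{(2)}]) = \alpha([\xi_+^{(1)},\xi_+^{(2)}])$, and likewise on the $\mathfrak{b}$-side, produces
\[
L_{\mu\nu}^*\omega_{\mu\nu}\bigl((\xi_1,\eta_1),(\xi_2,\eta_2)\bigr) = \alpha\bigl([\xi_+^{(1)},\xi_+^{(2)}]\bigr) - \alpha\bigl([\xi_-^{(1)},\xi_-^{(2)}]\bigr).
\]

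It then remains to compute $i_{\mu\nu}^*\omega$ and match. Using $\xi_i = \mbox{Ad}_{g_-^{-1}}(\xi_+^{(i)}+\xi_-^{(i)})$, the bracket term expands as $\sigma([\xi_1,\xi_2]) = \alpha\bigl([\xi_+^{(1)}+\xi_-^{(1)},\xi_+^{(2)}+\xi_-^{(2)}]\bigr)$. For the pairings, $\mbox{Ad}^\sharp_{g_-}\eta_i\in\mathfrak{b}^0$ kills the $\xi_-^{(j)}\in\mathfrak{b}$ part, so $\eta_i(\xi_j) = (\mbox{Ad}^\sharp_{g_-}\eta_i)(\xi_+^{(j)})$, and the identity from the first paragraph rewrites this as $\alpha\bigl([\xi_+^{(i)}+\xi_-^{(i)},\xi_+^{(j)}]\bigr)$. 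Summing all contributions, the four ``mixed'' terms $\alpha([\xi_+,\xi_-])$ cancel in pairs by antisymmetry, and what survives is precisely $\alpha\bigl([\xi_+^{(1)},\xi_+^{(2)}]\bigr) - \alpha\bigl([\xi_-^{(1)},\xi_-^{(2)}]\bigr)$, matching the orbit-side expression. The main obstacle is the careful bookkeeping of the projections $\pi_{\mathfrak{a}^0}$ and $\pi_{\mathfrak{b}^0}$ when commuting them with $\mbox{ad}^\sharp$ (one uses repeatedly that $\mbox{ad}^\sharp_\xi$ preserves $\mathfrak{b}^0$ when $\xi\in\mathfrak{b}$, and the symmetric statement for $\mathfrak{a}$), together with fixing the sign conventions $\omega = \pm d\theta$ and the Kirillov--Kostant orientation consistently so that the identity holds without a stray sign.
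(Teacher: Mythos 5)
Your proposal is correct and follows essentially the same route as the paper's own proof (which appears for the restatement of this proposition with $M_{\mu\nu}$ in place of $L_{\mu\nu}$ in the subsection on gauge fixing and reduction): both sides are evaluated on pairs of tangent vectors to $\Lambda_{\mu\nu}$, the linearized constraint equations are combined with the factorization $g=g_+g_-$ and the derivative formula \eqref{DerivadaLmunu}, and the resulting bracket terms are matched with the Kirillov--Kostant expression. Your bookkeeping via $\alpha=\mbox{Ad}^\sharp_{g_-}\sigma$ is only a mild streamlining of the same cancellation the paper carries out.
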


A proof of this proposition can be found in \cite{capriotti10:_dirac_lie}.

\subsection{The dynamical data for AKS systems}\label{DynamicalDataAKS}
Therefore $\mathcal{O}^{A}_\mu\times\mathcal{O}^{B}_\nu$ with the symplectic structure $\omega_{\mu\nu}$ is symplectomorphic to the reduced space associated to the $A\times B$-action defined above on $G\times\mathfrak{g}^*$. As was pointed out before, the hamiltonian $\mathsf{H}\left(g,\sigma\right)=\frac{1}{2}\sigma\left(\sigma^\flat\right)$ is invariant for this action, and this implies that the solutions of the dynamical system defined by such a hamiltonian on $G\times\mathfrak{g}^*$ are in one-to-one correspondence with those of the dynamical system induced on $\mathcal{O}^{A}_\mu\times\mathcal{O}^{B}_\nu$ by the hamiltonian $\mathsf{H}_{\mu\nu}$ \cite{A-M} defined through
\[
i_{\mu\nu}^*\mathsf{H}=L_{\mu\nu}^*\mathsf{H}_{\mu\nu}.
\]
Let us note now that if $L_{\mu\nu}\left(g,\sigma\right)=\left(\omega_1,\omega_2\right)$, then $\mbox{Ad}^\sharp_{g_-}\sigma=\omega_1+\omega_2$, and so
\begin{align*}
\mathsf{H}_{\mu\nu}\left(\omega_1,\omega_2\right)&=\frac{1}{2}\left(\mbox{Ad}_{g_-^{-1}}^\sharp\left(\omega_1+\omega_2\right)\right)\left[\left(\mbox{Ad}_{g_-^{-1}}^\sharp\left(\omega_1+\omega_2\right)\right)^\flat\right]\cr
&=\frac{1}{2}\left(\mbox{Ad}_{g_-^{-1}}^\sharp\left(\omega_1+\omega_2\right)\right)\left[\mbox{Ad}_{g_-^{-1}}\left(\omega_1+\omega_2\right)^\flat\right]\cr
&=\frac{1}{2}\omega_1\left(\omega_1^\flat\right)+\frac{1}{2}\omega_2\left(\omega_2^\flat\right)+\omega_1\left(\omega_2^\flat\right).
\end{align*}
Therefore
\begin{multline*}
\left.\dif \mathsf{H}_{\mu\nu}\right|_{\left(\omega_1,\omega_2\right)}\left(\pi_{\mathfrak{b}^0}\left(\mbox{ad}^\sharp_\xi\omega_1\right),\pi_{\mathfrak{a}^0}\left(\mbox{ad}^\sharp_\zeta\omega_2\right)\right)=\cr
=\pi_{\mathfrak{b}^0}\left(\mbox{ad}^\sharp_\xi\omega_1\right)\left(\omega_1^\flat+\omega_2^\flat\right)+\pi_{\mathfrak{a}^0}\left(\mbox{ad}^\sharp_\zeta\omega_2\right)\left(\omega_1^\flat+\omega_2^\flat\right)
\end{multline*}
and the hamiltonian vector field will be given by
\begin{equation}\label{CampoVectHmunu0}
\left.V_{\mathsf{H}_{\mu\nu}}\right|_{\left(\omega_1,\omega_2\right)}=\left(\pi_{\mathfrak{b}^0}\left(\mbox{ad}^\sharp_{\pi_{\mathfrak{a}}\left(\omega_1^\flat+\omega_2^\flat\right)}\omega_1\right),\pi_{\mathfrak{a}^0}\left(\mbox{ad}^\sharp_{\pi_{\mathfrak{b}}\left(\omega_1^\flat+\omega_2^\flat\right)}\omega_2\right)\right).
\end{equation}
In general the solution of dynamical systems via reduction has a definite direction, opposite to the adopted by us in this discussion: That is, it is expected that the reduced system is easiest to solve, because it involves less degrees of freedom, and the solution of the original system is found by lifting the solution of the reduced system. In case of AKS systems, we proceed in the reverse direction: To this end let us note that the dynamical system on $G\times\mathfrak{g}^*$ determined by $\mathsf{H}$ has hamiltonian vector field given by
\begin{equation}\label{CampoVectHmunuSimp}
\left.V_\mathsf{H}\right|_{\left(g,\sigma\right)}=\left(\sigma^\flat,-\mbox{ad}^\sharp_{\sigma^\flat}\sigma\right)=\left(\sigma^\flat,0\right)
\end{equation}
because of the invariance condition $\mbox{ad}^\sharp_\xi\xi^\flat=0$ for all $\xi\in\mathfrak{g}$. Then the solution in the unreduced space passing through $\left(g,\sigma\right)$ at the initial time is
\[
t\mapsto\left(g\exp{t\sigma^\flat},\sigma\right),
\]
and if this initial data verifies $\pi_{\mathfrak{b}^0}\left(\mbox{Ad}^\sharp_g\sigma\right)=\mu,\pi_{\mathfrak{a}^0}\left(\sigma\right)=\nu$, then this curve will belong to $\Lambda_{\mu\nu}$ for all $t$. Therefore the map
\[
t\mapsto\left(\pi_{\mathfrak{b}^0}\left(\mbox{Ad}^\sharp_{\left(g_+\left(t\right)\right)^{-1}}\mu\right),\pi_{\mathfrak{a}^0}\left(\mbox{Ad}^\sharp_{g_-\left(t\right)}\nu\right)\right),
\]
where $g_\pm:\mathbb{R}\rightarrow G_\pm$ are the curves defined by the factorization problem $g_+\left(t\right)g_-\left(t\right)=g\exp{t\sigma^\flat}$, is solution for the dynamical system associated to the vectorial field \eqref{CampoVectHmunu0}, which is more difficult to solve than the original system.

\section{Non standard variational problems}

We will consider that a \emph{variational problem} \cite{GotayCartan} is a triple $\left(\Lambda,\lambda,\cI\right)$, where $p:\Lambda\rightarrow M$ is a fiber bundle on an $n$-dimensional manifold $M$, $\lambda$ is an $n$-form on $\Lambda$, and $\cI\subset\Omega^\bullet\left(\Lambda\right)$ is an EDS there. A dynamical system can be attached to these data by selecting an intermediate bundle
\[
\Lambda\rightarrow\Lambda_1\rightarrow M;
\]
for example, the variational problem for classical mechanics can be obtained from the variational problem $\left(TQ\times I\rightarrow I,L\dif t,\left<\dif q^i-v^i\dif t\right>_{\text{diff}}\right)$ by picking up the intermediate bundle $TQ\times I\rightarrow Q\times I\rightarrow I$. The new data allow us to formulate the variational problem of classical mechanics in the following form: To find a section of $Q\times I\rightarrow I$ such that it is an extremal of the action
\[
S\left[\gamma\right]:=\int_I\dot{\gamma}^*\left(L\dif t\right),
\]
where $\dot{\gamma}:I\rightarrow TQ\times I$ is a section satisfying the folllowing requeriments
\begin{itemize}
\item It makes commutative the diagram
\[
\begin{diagram}
\node[2]{TQ\times I}\arrow{s,r}{\tau_Q\times\text{id}}\\
\node{I}\arrow{ne,t}{\dot{\gamma}}\arrow{e,b}{\gamma}\node{Q\times I}
\end{diagram}
\]
\item It is an integral section for $\left<\dif q^i-v^i\dif t\right>_{\text{diff}}$.
\end{itemize}
The introduction of an intermediate bundle in a variational problem $\left(\Lambda,\lambda,\cI\right)$ allow us to consider the EDS $\cI$ as a \emph{prolongation structure}, and thus providing conditions that associate to every section of this intermediate bundle a section of the full bundle.

\section{Lepage equivalence of variational problems}

Let us introduce a method to work with a variational problem of the general kind that we are considering here. The contents of the following section are adapted from the author's work \cite{SantiJGM}.

\subsection{(Multi)hamiltonian formalism through Lepagean equivalent problems}

We want to construct a hamiltonian version for the non standard variational
problem. The usual approach \cite{Gotay:1997eg} seems useless here, because of the following facts:
\begin{itemize}
\item The covariant multimomentum space is a bundle in some sense dual to the velocity space, which is a jet space. 
\item The dynamics in the multimomentum space is defined through the Legendre transform, and it is not easy to generalize to a non standard problem this notion.
\end{itemize}
The trick to circumvect the difficulties is to mimic the passage from Hamilton's principle to Hamilton-Pontryaguin principle. This is done by including the generators of the prolongation structure in the lagrangian density by means of a kind of Lagrange multipliers. This procedure will be formalized below, where the hamiltonian version is defined by associating a first order variational
problem to the non standard variational problem, whose extremals are in one to one correspondence with its extremals. This is called \emph{canonical bivariant Lepage equivalent problem}.

\subsection{Lepagean equivalent problems}
Here we will follow closely the exposition of the subject in the article \cite{GotayCartan}. Before going into details, let us introduce a bit of terminology: If $\Lambda\stackrel{\pi}{\longrightarrow}M$ is a bundle, $\lambda\in\Omega^n\left(\Lambda\right)$ ($n=\text{dim}\,M$) and $\cI$ is an EDS on $\Lambda$, the symbol $\left(\Lambda\stackrel{\pi}{\longrightarrow}M,\lambda,\cI\right)$ indicates the variational problem consisting in extremize the action
\[
S\left[\sigma\right]=\int_M\sigma^*\left(\lambda\right)
\]
with $\sigma\in\Gamma\left(\Lambda\right)$ restricted to the set of integral sections of $\cI$. Furthermore, $\mathcal{E}\left(\lambda\right)$ will denote the set of extremals for $\left(\Lambda\stackrel{\pi}{\longrightarrow}M,\lambda,\cI\right)$.

The idea is to eliminate in some way the constraints imposed by the elements of $\mathcal{I}$; intuitively, it is expected that the number of unknown increase when this is done. The following concept captures these ingredients formally.

\begin{defn}[Lepage equivalent variational problem]
A \emph{Lepagean equivalent} of a variational problem $\left(\Lambda\stackrel{\pi}{\longrightarrow}M,\lambda,\cI\right)$ is another variational problem
\[
\left(\tilde{\Lambda}\stackrel{\rho}{\longrightarrow}M,\tilde{\lambda},\left\{0\right\}\right)
\]
together with a surjective submersion $\nu:\tilde{\Lambda}\rightarrow\Lambda$ such that
\begin{itemize}
\item $\rho=\pi\circ\nu$, and
\item if $\gamma\in\Gamma\left(\tilde{\Lambda}\right)$ is such that $\nu\circ\gamma$ is an integral section of $\cI$, then
\[
\gamma^*\tilde{\lambda}=\left(\nu\circ\gamma\right)^*\lambda.
\]
\end{itemize}
\end{defn}
There exists a canonical way to build up a Lepage equivalent problem associated to a given variational problem $\left(\Lambda\stackrel{\pi}{\longrightarrow}M,\lambda,\cI\right)$, the so called canonical Lepage equivalent problem. Let $\cI$ be differentially generated by the sections of a graded subbundle $I\subset\bigwedge^\bullet\left(T^*\Lambda\right)$ (this is a ``constant rank'' hypothesis, ensuring the existence of a bundle in the construction, see below). Define $\cI^{\text{alg}}$ as the algebraic ideal in $\Omega^\bullet\left(\Lambda\right)$ generated by $\Gamma\left(I\right)$, and
\[
\left(\cI^{\text{alg}}\right)^l:=\cI^{\text{alg}}\cap\Omega^l\left(\Lambda\right).
\]
For $\lambda\in\Omega^n\left(\Lambda\right)$, define the affine subbundle $W^\lambda\subset\bigwedge^n\left(T^*\Lambda\right)$ whose fiber above $p\in\Lambda$ is
\[
\left.W^\lambda\right|_p:=\left\{\left.\lambda\right|_p+\left.\beta\right|_p:\beta\in\left(\cI^{\text{alg}}\right)^n\right\}.
\]
\begin{defn}[Canonical Lepage equivalent problem]\label{DefCanonicalLepageEquivalent}
In the previous setting, it is the triple $\left(W^\lambda\stackrel{\rho}{\longrightarrow}M,\tilde{\Theta},\left\{0\right\}\right)$, where $\nu$ is the canonical projection $\bar{\tau}^n_\Lambda:\bigwedge^n\left(T^*\Lambda\right)\rightarrow\Lambda$ restricted to $W^\lambda$, $\rho:=\pi\circ\nu$ and $\tilde{\Theta}$ is the pullback of the canonical $n$-form
\[
\left.\Theta_n\right|_\alpha:=\alpha\circ\left(\bar{\tau}^n_\Lambda\right)_*
\]
to $W^\lambda$. The form $\tilde\Theta$ will be called \emph{Cartan form} of the variational problem.
\end{defn}
\begin{rem}
It is worth remarking that our terminology was brought from \cite{GotayCartan}, which is slightly different from the classical theory as exposed in e.g. \cite{KrupkaLagrangeanStructures}. It is fully explained in the former work how to be relate both approaches.
\end{rem}

Returning to our main concern, it can be proved that the canonical Lepage equivalent is a Lepagean equivalent problem of $\left(\Lambda\stackrel{\pi}{\longrightarrow}M,\lambda,\cI\right)$. Now, the extremals of some variational problem has, in general, nothing to do with the extremals of its Lepagean equivalent problem, so it is necessary to introduce the following definition.
\begin{defn}[Covariant and contravariant Lepage equivalent problems]
We say that a Lepagean equivalent problem $\left(W^\lambda\stackrel{\rho}{\longrightarrow}M,\tilde{\Theta},\left\{0\right\}\right)$ for the variational problem $\left(\Lambda\stackrel{\pi}{\longrightarrow}M,\lambda,\cI\right)$ is \emph{covariant} if $\nu\circ\gamma\in\mathcal{E}\left(\lambda\right)$ for all $\gamma\in\mathcal{E}\left(\tilde{\Theta}\right)$; on the contrary, it is called \emph{contravariant} if every $\sigma\in\mathcal{E}\left(\lambda\right)$ is the projection of some extremal in $\mathcal{E}\left(\tilde{\Theta}\right)$ through $\nu$. A Lepagean equivalent problem is \emph{bivariant} if and only if it is both covariant and contravariant.
\end{defn}
There exists a fundamental relation between the extremals of a variational problem and the extremals of its canonical Lepage equivalent.
\begin{thm}
The canonical Lepage equivalent is covariant.
\end{thm}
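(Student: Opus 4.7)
The plan is to split the extremal condition for $\gamma\in\mathcal{E}(\tilde{\Theta})$ according to the two natural classes of $\rho$-vertical variations that $W^\lambda$ admits: those tangent to the affine fibers of $\nu:W^\lambda\to\Lambda$, which leave $\sigma:=\nu\circ\gamma$ untouched, and those covering a non-trivial $\pi$-vertical variation of $\sigma$ in $\Lambda$. Because the EDS of the Lepage equivalent is trivial, extremality of $\gamma$ requires every compactly supported $\rho$-vertical variation to produce a vanishing first variation, so these two classes can be exploited independently.

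From the first class I would extract that $\sigma$ is an integral section of $\cI$. Write $\gamma(x)=\lambda|_{\sigma(x)}+\beta(x)$ with $\beta(x)\in(\cI^{\text{alg}})^n|_{\sigma(x)}$, and pick any compactly supported section $\eta$ of $\sigma^{*}(\cI^{\text{alg}})^n$; the tautological definition $\tilde{\Theta}|_w=w\circ\nu_*$ of the Cartan form gives
\[
(\gamma+t\eta)^{*}\tilde{\Theta}|_x(v_1,\ldots,v_n)=(\lambda|_{\sigma(x)}+\beta(x)+t\eta(x))(\sigma_*v_1,\ldots,\sigma_*v_n),
\]
so the first variation of the action equals the integral over $M$ of the $n$-form $x\mapsto\eta(x)(\sigma_*\cdot,\ldots,\sigma_*\cdot)$. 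Extremality forces this $n$-form to vanish pointwise for every such $\eta$. Now the constant-rank hypothesis on $I$ allows me to wedge each local generator $\theta\in\Gamma(I^k)$ with an arbitrary $(n-k)$-form $\omega$ on $\Lambda$ to produce an element of $(\cI^{\text{alg}})^n$; since $\sigma^{*}(\theta\wedge\omega)=\sigma^*\theta\wedge\sigma^*\omega$ must vanish at every $x\in M$ and every $(n-k)$-covector on $T_xM$ arises as some $\sigma^*\omega|_x$, non-degeneracy of the wedge pairing on $\bigwedge^{\bullet}T_x^*M$ forces $\sigma^*\theta=0$ for every generator of $\cI$.

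From the second class I would recover the stationarity of $\int_M\sigma^*\lambda$ among integral competitors. Take any variation $\sigma_t$ of $\sigma=\sigma_0$ through integral sections of $\cI$ and lift it to a variation $\gamma_t(x):=\lambda|_{\sigma_t(x)}+\beta_t(x)$ of $\gamma$ in $W^\lambda$; such a lift exists because the only constraint on $\beta_t$ is that it take values in $(\cI^{\text{alg}})^n$, and the choice can be made smoothly with $\beta_0=\beta$. The same tautological computation gives
\[
\gamma_t^{*}\tilde{\Theta}|_x(v_1,\ldots,v_n)=\sigma_t^{*}\lambda|_x(v_1,\ldots,v_n)+\beta_t(x)(\sigma_{t*}v_1,\ldots,\sigma_{t*}v_n),
\]
and the last summand vanishes identically because $\sigma_t$ is integral while $\beta_t$ is valued in the algebraic ideal. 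Hence $\int_M\gamma_t^{*}\tilde{\Theta}=\int_M\sigma_t^{*}\lambda$ for every $t$, and differentiating at $t=0$ together with $\gamma\in\mathcal{E}(\tilde{\Theta})$ yields $\frac{d}{dt}\big|_{t=0}\int_M\sigma_t^{*}\lambda=0$; combined with the integrality obtained before, this places $\sigma$ in $\mathcal{E}(\lambda)$.

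The main obstacle I anticipate is the algebraic upgrade inside the first step, where one must pass from $\sigma^*\eta=0$ for every $\eta\in(\cI^{\text{alg}})^n$ to $\sigma^*\theta=0$ for the strictly lower-degree generators of $\cI$. This is the place where the constant-rank hypothesis on the graded subbundle $I$ is essential, since it guarantees both that the local generators of $\cI$ can be freely wedged with arbitrary forms to stay inside $(\cI^{\text{alg}})^n$ and that $W^\lambda$ itself has the affine bundle structure used throughout; once this point is settled, the rest of the argument reduces to bookkeeping with the tautological definition of $\tilde{\Theta}$ and the affine structure of $W^\lambda\to\Lambda$.
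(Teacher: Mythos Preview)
The paper does not actually prove this theorem; it simply states ``For a proof, see \cite{GotayCartan}''. So there is no in-paper argument to compare against.

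That said, your proposal is sound and follows the expected line of reasoning. The two-step split---fiber-vertical variations to force integrality of $\sigma=\nu\circ\gamma$, then lifts of integral variations to recover stationarity of $\int_M\sigma^*\lambda$---is exactly the natural strategy, and your use of the tautological identity $\tilde{\Theta}|_w=w\circ\nu_*$ is correct in both places. The point you flag as the main obstacle is not really one: since $\sigma$ is a section of $\pi$, the pullback $\sigma^*$ is surjective onto $\Omega^{n-k}(M)$ (take $\omega=\pi^*\alpha$), so $\sigma^*\theta\wedge\alpha=0$ for every $(n-k)$-form $\alpha$ on $M$ forces $\sigma^*\theta=0$ by non-degeneracy of the wedge pairing in top degree. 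The constant-rank hypothesis is used only to guarantee that $W^\lambda$ is an affine bundle and that the fiber variations $\eta$ sweep out all of $(\cI^{\text{alg}})^n$ locally; you invoke it in the right place. The one cosmetic point worth tightening is to note explicitly that the lifted variation $\gamma_t$ in your second step can be taken with compact support matching that of $\sigma_t$, so that it is admissible for the free variational problem on $W^\lambda$.
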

For a proof, see \cite{GotayCartan}. The contravariant nature of a Lepage
equivalent problem is more subtle to deal with; in fact, it must be verified in each case separately. 

\subsection{Canonical Lepage equivalent of a non standard problem}
We will apply these considerations to our problem. The important thing to note is that, if a variational problem has a covariant and contravariant Lepagean equivalent problem, then the latter can be considered as a kind of Hamilton-Pontryaguin's principle for the given variational problem; in fact, it is shown below that the canonical Lepagean equivalent problem associated to the variational problem underlying the Hamilton's principle gives rise to the classical Hamilton-Pontryaguin's principle. This will be our starting point for assigning a multisymplectic space to the variational problem we are dealing with. The bivariance ensures us that every extremal has been taken into account in the new setting. Otherwise, namely, for non contravariant canonical Lepage equivalent problems, some extremals for the original variational problem could be lost in the process.

So let
us suppose that we have the non standard problem defined by the following
data\[
\begin{cases}
\Lambda\rightarrow\Lambda_{1}\rightarrow M,\\
\mathcal{I}\subset\Omega^{\bullet}\left(\Lambda\right),\\
S\left[\sigma\right]:=\int_{M}\left(\textsf{\textbf{pr}}\sigma\right)^{*}\left(\lambda\right).\end{cases}\]
If $\cI$ in this non standard problem has the required regularity (i.e. a ``constant rank'' hypothesis, see reference \cite{SantiJGM}), then the variational problem $\left(\Lambda\rightarrow M,\lambda,\cI\right)$ will have a canonical Lepage equivalent problem $\left(\tilde{W}^\lambda\rightarrow M,\tilde{\Theta},\left\{0\right\}\right)$, and we can apply the scheme described above. In order
to carry out this task locally, let us suppose as above that the fibers of the bundle $I$ on an open set $U\subset\Lambda$ can be written as
\[
\left.I\right|_\gamma=\mathbb{R}\left\langle\left.\alpha_{1}^{1}\right|_\gamma,\cdots,\left.\alpha_{k_{1}}^{1}\right|_\gamma,\left.\alpha_{1}^{2}\right|_\gamma,\cdots,\left.\alpha_{k_{2}}^{2}\right|_\gamma,\cdots,\left.\alpha_{1}^{p}\right|_\gamma,\cdots,\left.\alpha_{k_{p}}^{p}\right|_\gamma\right\rangle,\quad\gamma\in U
\]
so that the prolongation
structure $\cI$ is (differentially) generated by\[
\cI=\left\langle \alpha_{i}^{j}:1\leq i\leq p,1\leq j\leq k_i\right\rangle _{\text{diff}}\]
on $U$, where $\left\langle \alpha_{1}^{j},\cdots,\alpha_{k_{j}}^{j}\right\rangle _{\text{diff}}=\cI^{\left(j\right)}=:\cI\cap\Omega^{j}\left(U\right)$ and $\alpha_i^j\in Z_1\left(\Lambda\right)$ for all $i,j$.
Then if $U_0:=\pi\left(U\right)\subset M$, we can define for each $1\leq l\leq p$ the numbers $m_{l}:=\mbox{dim}\left(M\right)-l$
and the $n$-form on $\tilde{\Lambda}_U:=U\times_{U_0}\bigoplus_{l=1}^p\left[\bigwedge^{m_l}\left(T^*U_0\right)\right]^{\oplus k_l}$ will reads\[
\tilde{\lambda}:=\sum_{l=1}^{p}\left(\sum_{j=1}^{k_{l}}\alpha_{j}^{l}\wedge\beta^{j}_{m_l}\right)-\lambda\]
where $\left(\beta^{1}_{m_{1}},\cdots,\beta^{k_{1}}_{m_{1}},\cdots,\beta^{1}_{m_{p}},\cdots,\beta^{k_{p}}_{m_{p}}\right)$
denotes sections of $\bigwedge^{m_l}\left(T^*U_0\right)$; the subscript in these sections thus indicates their degree (in the exterior algebra sense). It can be shown in the
examples below that, in many important cases, the Euler-Lagrange
equations associated to the non standard problem defined by the data\[
\begin{cases}
\tilde{\Lambda}_U\stackrel{\mbox{id}}{\longrightarrow}\tilde{\Lambda}_U\rightarrow M,\\
0\subset\Omega^{\bullet}\left(\tilde\Lambda_U\right),\\
S\left[\sigma\right]:=\int_{M}\left(\textsf{\textbf{pr}}\sigma\right)^{*}\tilde{\lambda},\end{cases}\]
has a family of solutions which is isomorphic to the family of solutions
of the previous system; this means that in these cases the canonical
Lepage equivalent problem is also contravariant. The Euler-Lagrange eqs for an stationary section $\sigma\in\Gamma\left(\tilde{\Lambda}_U\right)$
are\begin{equation}
\sigma^{*}\left(V\lrcorner\dif\tilde{\lambda}\right)=0,\qquad\forall V\in\Gamma\left(V\tilde{\Lambda}_U\right)\label{eq:HamEqElectro1}\end{equation}
because there are no conditions for admisibility of variations; these
sections are then integral sections for the EDS\begin{equation}
\cI:=\left\langle V\lrcorner\dif\tilde{\lambda}:V\in\Gamma\left(V\tilde{\Lambda}_U\right)\right\rangle _{\text{diff}}.\label{eq:EDSHam}\end{equation}
We call this EDS the (local version of) \emph{Hamilton-Cartan EDS}.

\section{The Lagrangian side of AKS systems}

\subsection{Prolongation and constraints}
The purpose of this section is to use our description of dynamical systems as a way of thinking about AKS systems. In order to motivate the essential ideas, let us consider the free dynamical system on $\mathbb{R}$ described by the lagragian $I\times T\mathbb{R}\times\mathbb{R}$
\[
\left.L\right|_{\left(t;q,v;w\right)}:=\left[\frac{1}{2}v^2-v_0w\right]\dif t
\]
and the prolongation structure
\[
\left.\theta\right|_{\left(t;q,v;w\right)}:=\dif q-\left(v-w\right)\dif t.
\]
This means that the allowed variations must keep invariant the equation $\dif\delta q-\left(\delta v-\delta w\right)\dif t=0$; if we vary the curve $\gamma:t\mapsto\left(t;q\left(t\right),v\left(t\right);w\left(t\right)\right)$, the variation of the associated action will be (by neglecting boundary terms)
\begin{align*}
\delta S&=\int_I\gamma^*\left(v\delta v-v_0\delta w\right)\dif t\cr
&=\int_I\gamma^*\left[v\left(\dif\delta q+\delta w\dif t\right)-v_0\delta w\right]\cr
&=\int_I\gamma^*\left(\delta q\dif v+\left(v-v_0\right)\delta w\dif t\right).
\end{align*}
So the Euler-Lagrange equations for this dynamical system will be
\[
\begin{cases}
\dot{v}=0,&\cr
v=v_0.&
\end{cases}
\]
This oversimplificated example clearly shows a fundamental characteristic involved with this kind of modifications in the prolongation structure: The ability of introducing constraints on the velocities.
\newline
On the other hand, for $G$ a factorizable Lie group $G=AB$, and taking into account $G\times\mathfrak{g}^*$ with the canonical symplectic structure (using left trivialization). The assumed factorization induces the decompositions
\begin{align*}
\mathfrak{g}&=\mathfrak{a}\oplus\mathfrak{b}\cr
\mathfrak{g}^*&=\mathfrak{a}^0\oplus\mathfrak{b}^0
\end{align*}
of the Lie algebra and its dual, where $\left(\cdot\right)^0$ indicates annihilator. Let $\pi_{\mathfrak{a}^0},\pi_{\mathfrak{b}^0}$ be the projections associated with the decomposition on $\mathfrak{g}^*$; additionally, let us take
$\mu\in\mathfrak{b}^0,\nu\in\mathfrak{a}^0$. Then we can define the submanifold
\[
\tilde{\Lambda}_{\mu\nu}:=\left\{\left(g,\zeta\right)\in G\times\mathfrak{g}^*:\pi_{\mathfrak{b}^0}\left(\mbox{Ad}^*_{g^{-1}}\zeta\right)=\mu,\pi_{\mathfrak{a}^0}\left(\zeta\right)=\nu\right\}.
\]
We will see later that this is a first class submanifold, and that the kernel of the form induced by the canonical form on it by pullback is generated by the infinitesimal generators of the action of $A_\mu\times B_\nu$
on $G$ given by
\[
\left(a,b\right)\cdot g:=agb^{-1}
\]
lifted to the cotangent bundle. Here $A_\mu$ (resp. $B_\nu$) are the isotropy groups of $\mu$
(resp. $\nu$) for the actions
\begin{align*}
a\cdot\phi&:=\pi_{\mathfrak{b}^0}\left(\mbox{Ad}_{a^{-1}}^*\phi\right),\qquad\forall\phi\in\mathfrak{b}^0,\cr
b\cdot\psi&:=\pi_{\mathfrak{a}^0}\left(\mbox{Ad}_{b^{-1}}^*\psi\right),\qquad\forall\psi\in\mathfrak{a}^0.
\end{align*}
Quotient out by this action we can obtain a symplectic manifold, whose equations of motion are nothing but the equations of an AKS system.

\subsection{The non standard mechanics of AKS}
This section is inspired in the article of F\`{e}her \emph{et al.} \cite{feher-2002-301}. Let us consider as velocity space the bundle
\[
\Lambda:=TG\oplus\mathfrak{a}_G\oplus\mathfrak{b}_G
\]
where, as before, we suppose that $G$ admits a decomposition $G=AB$ and the fiber bundle on
$G$ denoted by $\mathfrak{a}_G$ and $\mathfrak{b}_G$ are simply $G\times\mathfrak{a}$
and $G\times\mathfrak{b}$; a point in this space (with the left trivialization for $TG$) will be
$\left(g,J;\alpha,\beta\right)$. There we will define the $1-$form $\mathfrak{g}-$valued
\[
\left.\theta\right|_{\left(g,J;\alpha,\beta\right)}:=\left.\lambda\right|_g-\left(J-\mbox{Ad}_{g^{-1}}\alpha-\beta\right)\dif t,
\]
(here $\lambda$ is the left Maurer-Cartan form) defining the prolongation structure, and we will consider the lagrangian
\[
\left.L_{\mu\nu}\right|_{\left(g,J;\alpha,\beta\right)}:=\left[\frac{1}{2}\mathsf{B}\left(J,J\right)-\mu\left(\alpha\right)-\nu\left(\beta\right)\right]\dif t
\]
where $\mu\in\mathfrak{b}^0,\nu\in\mathfrak{a}^0$ and, as pointed out before,
$\left(\cdot\right)^0$ is the annihilator of the corresponding subalgebra.
\newline
The Euler-Lagrange equations can be obtained performing variations of the action
\[
S_{\mu\nu}[\gamma]:=\int_{I}\gamma^*\left(L_{\mu\nu}\dif t\right)
\]
where
$\gamma:t\mapsto\left(t;g\left(t\right),J\left(t\right);\alpha\left(t\right),\beta\left(t\right)\right)$
represents a curve in $\Lambda$. Concretely, a variation of such a curve is a vector field
\[
\widehat{\delta\gamma}:\left(t;g,J;\alpha,\beta\right)\mapsto\left(0;g,\xi,J,\delta
  J;\alpha,\delta\alpha,\beta,\delta\beta\right)
\]
for certain funtions $\xi:\Lambda\rightarrow\mathfrak{g},\delta
J:\Lambda\rightarrow\mathfrak{g},\delta\alpha:\Lambda\rightarrow\mathfrak{a},\delta\beta:\Lambda\rightarrow\mathfrak{b}$
(we are considering here $TG$ left trivialized as before) such that
\[
\gamma^*\left(\mathcal{L}_{\widehat{\delta\gamma}}\theta\right)=0.
\]
From this last condition we will obtain that on $\gamma$ the maps describing the variation must obey the relation
\begin{equation}\label{VarsAdmisibles}
\dif\xi+\left[\lambda,\xi\right]-\left(\delta
  J-\mbox{Ad}_{g^{-1}}\left(\delta\alpha\right)-\left[\mbox{Ad}_{g^{-1}}\alpha,\xi\right]-\delta\beta\right)\dif t=0.
\end{equation}
Therefore
\[
\delta S_{\mu\nu}[\gamma]=\int_{I}\gamma^*\left(\widehat{\delta\gamma}\cdot L_{\mu\nu}\dif t\right)
\]
where
\begin{align*}
\widehat{\delta\gamma}\cdot L_{\mu\nu}&=\mathsf{B}\left(\delta J\dif
  t,J\right)-\left[\mu\left(\delta\alpha\right)+\nu\left(\delta\beta\right)\right]\dif
t\cr
&=\mathsf{B}\left(\dif\xi+\left[\lambda+\left(\mbox{Ad}_{g^{-1}}\alpha\right)\dif
    t,\xi\right],J\right)+\cr
&\qquad+\left[\mathsf{B}\left(\mbox{Ad}_{g^{-1}}\delta\alpha+\delta\beta,J\right)-\mu\left(\delta\alpha\right)-\nu\left(\delta\beta\right)\right]\dif t.
\end{align*}
Each $\phi\in\mathfrak{g}^*$ and
$\kappa\in\Omega^1\left(\Lambda,\mathfrak{g}\right)$ allow us to define the $1-$form
$\mathfrak{g}^*-$valued $\mbox{ad}_{\kappa}^{\sharp}\phi$ through
\[
\left(\mbox{ad}_{\kappa}^{\sharp}\phi\right)\left(\xi\right):=\phi\left(\left[\xi,\kappa\right]\right),\qquad\forall\xi\in\mathfrak{g};
\]
let us indicate with the symbol $\xi^\flat$ the form
$\mathsf{B}\left(\xi,\cdot\right)$, and let us denote by $\mbox{Ad}^\sharp:G\rightarrow\mathsf{Hom}\left(\mathfrak{g}^*\right)$ the coadjoint representation, defined through
\[
\left(\mbox{Ad}^\sharp_g\phi\right)\left(\xi\right)=\phi\left(\mbox{Ad}_{g^{-1}}\xi\right)
\]
for all $\xi\in\mathfrak{g},\phi\in\mathfrak{g}^*,g\in G$. With this notation, the element of $\mathfrak{g}^*$ obtained by contracting a vector $v$
tangent to $\Lambda$ with the $1-$form satisfy the equation
\[
\left(\mbox{ad}_{\kappa}^\sharp\phi\right)\left(v\right)=\mbox{ad}^\sharp_{\kappa\left(v\right)}\phi,
\]
where the element in the right hand side is the infinitesimal generator for the coadjoint action associated to $\kappa\left(v\right)\in\mathfrak{g}$ in the point $\phi\in\mathfrak{g}^*.$ Then the variations in $\delta\alpha,\delta\beta$ will give us the equations
\[
\pi_{\mathfrak{b}^0}\left(\mbox{Ad}_g^\sharp J^\flat-\mu\right)=0,\pi_{\mathfrak{a}^0}\left(J^\flat-\nu\right)=0
\]
respectively, and the variation along $TG$ (compatible with the restriction
\eqref{VarsAdmisibles}) yields to
\[
\dif J^\flat+\mbox{ad}^\sharp_{\left(\lambda+\mbox{Ad}_{g^{-1}}\alpha\dif t\right)}\left(J^\flat\right)=0.
\]
The solutions of the Euler-Lagrange equations are integral sections of
$I\times\Lambda$ for the EDS generated by
\[
\begin{cases}
\dif J^\flat-\mbox{ad}^\sharp_{\beta\dif t}\left(J^\flat\right),&\cr
\lambda-\left(J-\mbox{Ad}_{g^{-1}}\alpha-\beta\right)\dif t,&\cr
\pi_{\mathfrak{b}^0}\left(\mbox{Ad}_g^\sharp J^\flat\right)-\mu,&\cr
\pi_{\mathfrak{a}^0}\left(J^\flat\right)-\nu&
\end{cases}
\]
satisfying the independence condition $\dif t\not=0$; here we will assume that
$\mbox{ad}^\sharp_{\zeta\dif t}\zeta^\flat=0$ for all $\zeta\in\mathfrak{g}$.

\section{Hamiltonian problem}
Let us try to build a hamiltonian theory with these data. To this end, we will define the canonical Lepage equivalent to the variational problem
\[
\left(\Lambda,L_{\mu\nu}\dif t,\left<\theta\right>_{\text{diff}}\right).
\]
So if we set
$\tilde{\Lambda}:=I\times\left(\Lambda\oplus\left(G\times\mathfrak{g}^*\right)\right)$
and on it we define the $1-$form
\[
\left.\tilde{\lambda}\right|_{\left(g,J;\alpha,\beta;\sigma\right)}:=L_{\mu\nu}\left(g,J;\alpha,\beta\right)\dif t+\sigma\left(\theta\right),
\]
the EDS describing the extremals for the variational problem $\left(\tilde\Lambda,\tilde\lambda,0\right)$ will be generated by
\[
\begin{cases}
\dif\sigma+\mbox{ad}^\sharp_{\left(\lambda+\mbox{Ad}_{g^{-1}}\alpha\dif
    t\right)}\sigma&\cr
J^\flat-\sigma&\cr
\lambda-\left(J-\mbox{Ad}_{g^{-1}}\alpha-\beta\right)\dif t,&\cr
\pi_{\mathfrak{b}^0}\left(\mbox{Ad}_g^\sharp\sigma\right)-\mu,&\cr
\pi_{\mathfrak{a}^0}\left(\sigma\right)-\nu,&
\end{cases}
\]
where $\phi^\natural$ is the inverse isomorfism to $\left(\cdot\right)^\flat$. Therefore the projection
\[
\Pi:I\times\tilde{\Lambda}\rightarrow
I\times\Lambda:\left(t;g,J;\alpha,\beta;\sigma\right)\mapsto\left(t;g,J;\alpha,\beta\right)
\]
uniquely maps solutions of this EDS onto solutions of the Euler-Lagrange equations found above.
\newline
Then $L_{t_0}:=\left.\tilde{\Lambda}\right|\left\{t=t_0\right\}$ is a presymplectic manifold, with presymplectic form
\begin{align*}
\left.\omega_0\right|_{\left(g,J;\alpha,\beta;\sigma\right)}&:=\left.\dif\tilde{\lambda}\right|\left\{t=t_0\right\}\cr
&=\left<\dif\sigma\stackrel{\wedge}{,}\lambda\right>+\sigma\left(\dif\lambda\right)\cr
&=\left<\dif\sigma\stackrel{\wedge}{,}\lambda\right>-\frac{1}{2}\sigma\left(\left[\lambda\stackrel{\wedge}{,}\lambda\right]\right),
\end{align*}
where $\left<\cdot,\cdot\right>$ indicates the contraction of $\mathfrak{g}^*$
with $\mathfrak{g}$. The hamiltonian governing the dynamics is obtained from $\tilde{\lambda}$ via
\begin{align*}
H\left(g,J;\alpha,\beta;\sigma\right)&:=\left.\left(\partial_t\lrcorner\tilde{\lambda}\right)\right|\left\{t=t_0\right\}\cr
&=\sigma\left(\mbox{Ad}_{g^{-1}}\alpha+\beta-J\right)+L\left(g,J;\alpha,\beta\right)\cr
&=-\frac{1}{2}\sigma\left(J\right)+\frac{1}{2}\left(J^\flat-\sigma\right)\left(J\right)+\left(\mbox{Ad}^\sharp_g\sigma-\mu\right)\left(\alpha\right)+\left(\sigma-\nu\right)\left(\beta\right).
\end{align*}

\subsection{Gotay, Nester and Hinds algorithm}
Let us apply the Gotay, Nester and Hinds algorithm \cite{GotayNester} to the presymplectic manifold $\left(L_{t_0},\omega_0\right)$ with the Hamiltonian function $H$. For $l:=\left(g,J;\alpha,\beta;\sigma\right)$ we will have that
\begin{align*}
\left(T_lL_{t_0}\right)^\perp&=\left\{X=\left(\xi,\delta
    J,\delta\alpha,\delta\beta;\delta\sigma\right)\in
  T_lL_{t_0}:X\lrcorner\left.\omega_0\right|_l=0\right\}\cr
&=\left\{X\in
  T_lL_{t_0}:\delta\sigma\left(\lambda\right)-\dif\sigma\left(\xi\right)-\sigma\left(\left[\xi,\lambda\right]\right)=0\right\}\cr
&=\left\{X\in
  T_lL_{t_0}:\delta\sigma=0,\xi=0\right\},
\end{align*}
and so the invariant points will be $l\in L_{t_0}$ such that
$\left.X\right|_l\cdot H=0$ for all
$X\in\left(T_lL_{t_0}\right)^\perp$, that is
\begin{align*}
0&=\left.X\right|_l\cdot H\cr
&=\left(J^\flat-\sigma\right)\left(\delta J\right)+\left(\mbox{Ad}^\sharp_g\sigma-\mu\right)\left(\delta\alpha\right)+\left(\sigma-\nu\right)\left(\delta\beta\right)
\end{align*}
for all $\delta\alpha\in\mathfrak{a},\delta\beta\in\mathfrak{b},\delta
J\in\mathfrak{g}$. The primary constraints results
\[
\begin{cases}
J^\flat-\sigma=0&\cr
\pi_{\mathfrak{b}^0}\left(\mbox{Ad}_g^\sharp\sigma-\mu\right)=0&\cr
\pi_{\mathfrak{a}^0}\left(\sigma-\nu\right)=0.
\end{cases}
\]
Defining
\[
\tilde{\Lambda}^{\left(1\right)}:=\left\{\left(g,J;\alpha,\beta;\sigma\right)\in L_{t_0}:J^\flat=\sigma,\pi_{\mathfrak{b}^0}\left(\mbox{Ad}_g^\sharp\sigma-\mu\right)=0,\pi_{\mathfrak{a}^0}\left(\sigma-\nu\right)=0\right\},
\]
we want to see if this submanifold is invariant respect to the flux generated by our hamiltonian. In order to accomplish it, we will use the following result.

\begin{lem}\label{LemaBasePerp}
$\left(T_l\tilde{\Lambda}^{\left(1\right)}\right)^\perp\subset
T_lL_{t_0}$ is generated by the hamiltonian vector fields associated to the functions
\begin{align*}
f_2^\kappa\left(g,J;\alpha,\beta;\sigma\right)&:=\left(\pi_{\mathfrak{b}^0}\left(\mbox{Ad}_g^\sharp\sigma\right)-\mu\right)\left(\kappa\right)\cr
f_2^\omega\left(g,J;\alpha,\beta;\sigma\right)&:=\left(\pi_{\mathfrak{a}^0}\left(\sigma\right)-\nu\right)\left(\omega\right)
\end{align*}
where $\kappa,\omega\in\mathfrak{g}$ are arbitrary elements.
\end{lem}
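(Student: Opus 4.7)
The strategy is to compute $(T_l\tilde{\Lambda}^{(1)})^\perp$ by direct imposition of the orthogonality condition, and then recognize the resulting description as being produced by the Hamiltonian vector fields of the two families of functions $f_2^\kappa, f_2^\omega$. First I would observe a general principle: whenever a smooth function $f$ vanishes on $\tilde{\Lambda}^{(1)}$, any vector field $V$ satisfying $V\lrcorner\omega_0=\dif f$ automatically lies in $(T_l\tilde{\Lambda}^{(1)})^\perp$, since $\omega_0(V,X)=\dif f(X)=0$ for every $X$ tangent to the constraint submanifold. This gives the inclusion $\mathrm{span}\{V_{f_2^\kappa},V_{f_2^\omega}\}\subset (T_l\tilde{\Lambda}^{(1)})^\perp$ for free.

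Next I would linearize the three defining equations of $\tilde{\Lambda}^{(1)}$: a vector $X=(\xi_X,\delta J_X,\delta\alpha_X,\delta\beta_X;\delta\sigma_X)\in T_lL_{t_0}$ is tangent to $\tilde{\Lambda}^{(1)}$ iff $(\delta J_X)^\flat=\delta\sigma_X$, $\pi_{\mathfrak{a}^0}(\delta\sigma_X)=0$, and $\pi_{\mathfrak{b}^0}\mathrm{Ad}^\sharp_g(\delta\sigma_X+\mathrm{ad}^\sharp_{\xi_X}\sigma)=0$. Plugging these into the explicit form
\[
\omega_0(Y,X)=\delta\sigma_Y(\xi_X)-\delta\sigma_X(\xi_Y)-\sigma([\xi_Y,\xi_X]),
\]
and exploiting the fact that (for generic $g=g_+g_-$) the decomposition $\mathfrak{g}=\mathrm{Ad}_{g^{-1}}\mathfrak{a}\oplus\mathfrak{b}$ is direct, I would let $\xi_X$ vary freely and read off from the matching coefficients that $Y\in(T_l\tilde{\Lambda}^{(1)})^\perp$ iff $\delta\sigma_Y=-\mathrm{ad}^\sharp_{y_2}\sigma$, where $y_2$ is the component of $\xi_Y$ along $\mathfrak{b}$ in the above splitting; the components $\delta J_Y,\delta\alpha_Y,\delta\beta_Y$ remain unconstrained (they parametrize $\ker\omega_0$).

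Then I would compute the Hamiltonian vector fields explicitly, using
\[
\dif f_2^\kappa(X)=\delta\sigma_X(\mathrm{Ad}_{g^{-1}}\pi_\mathfrak{a}\kappa)+\sigma([\mathrm{Ad}_{g^{-1}}\pi_\mathfrak{a}\kappa,\xi_X]),\qquad \dif f_2^\omega(X)=\delta\sigma_X(\pi_\mathfrak{b}\omega).
\]
Solving $V\lrcorner\omega_0=\dif f_2^\kappa$ yields the $g$-component $-\mathrm{Ad}_{g^{-1}}\pi_\mathfrak{a}\kappa\in\mathrm{Ad}_{g^{-1}}\mathfrak{a}$ with vanishing $\sigma$-component, while $V_{f_2^\omega}$ has $g$-component $-\pi_\mathfrak{b}\omega\in\mathfrak{b}$ with $\sigma$-component $\mathrm{ad}^\sharp_{\pi_\mathfrak{b}\omega}\sigma$; all other components are free (i.e.\ determined only modulo $\ker\omega_0$). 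Letting $\kappa$ range over $\mathfrak{g}/\mathfrak{b}\cong\mathfrak{a}$ and $\omega$ range over $\mathfrak{g}/\mathfrak{a}\cong\mathfrak{b}$ produces exactly every admissible pair $(\xi_Y,\delta\sigma_Y)$ allowed by the perp condition derived above, and the remaining freedom is supplied by $\ker\omega_0\subset(T_l\tilde{\Lambda}^{(1)})^\perp$.

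The main obstacle is the degeneracy of $\omega_0$: the first primary constraint $J^\flat-\sigma$ also vanishes on $\tilde{\Lambda}^{(1)}$, but its differential contains a term $\mathsf{B}(\delta J,\cdot)$ which does \emph{not} lie in the image of $\omega_0^\flat$, so it does not contribute a genuine Hamiltonian vector field. One must verify that the direction it would nominally produce is in fact entirely absorbed by $\ker\omega_0$; this is confirmed by the dimension count $\dim(T_l\tilde{\Lambda}^{(1)})^\perp=\dim L_{t_0}-\dim T_l\tilde{\Lambda}^{(1)}+\dim(T_l\tilde{\Lambda}^{(1)}\cap\ker\omega_0)$, which equals $\dim(\ker\omega_0)+\dim\mathfrak{a}+\dim\mathfrak{b}$, matching the dimension spanned by $\ker\omega_0$ together with the two families of Hamiltonian vector fields.
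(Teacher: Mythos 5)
Your proposal is correct and follows essentially the same route as the paper: the easy inclusion of the Hamiltonian vector fields of the constraint functions into $\left(T_l\tilde{\Lambda}^{(1)}\right)^\perp$, followed by the presymplectic dimension formula $\dim\left(TN\right)^\perp=\dim P-\dim N+\dim\left(TN\cap\ker\omega_0\right)$ to force equality, with $\ker\omega_0$ accounting for the ambiguity in the Hamiltonian vector fields and for the absence of one associated to $J^\flat-\sigma$. The only point you should state explicitly is the linear independence of the constraint differentials, which you attribute to ``generic'' $g$ and which the paper isolates as the hypothesis $\mathfrak{b}\cap\mbox{Ad}_g\mathfrak{a}=\{0\}$ for all $g$ (valid, e.g., for Iwasawa factorizations).
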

\begin{proof}
The idea of the proof is to calculate the dimension of
$\left(T_l\tilde{\Lambda}^{\left(1\right)}\right)^\perp$ using the formula
\[
\mbox{dim}\,\left(TN\right)^\perp=\mbox{dim}\,P-\mbox{dim}\,N+\mbox{dim}\,\left(\mbox{Ker}\,\omega_0\cap
TN\right)
\]
valid for every submanifold $N\subset P$ of a presymplectic manifold. Let us note that
$\mbox{Ker}\,\left.\omega_0\right|_l=\left(T_lL_{t_0}\right)^\perp$, so if
$X=\left(\xi,\delta J,\delta\alpha,\delta\beta,\delta\sigma\right)$ is in
$\left(T_lL_{t_0}\right)^\perp$ and satisfies the constraint defined by the
$f_1$'s, then $X=\left(0,0,\delta\alpha,\delta\beta,0\right)$, and so it is tangent to $\tilde{\Lambda}^{\left(1\right)}$ at $l$. So
\[
\mbox{dim}\,\left(\mbox{Ker}\,\left.\omega_0\right|_l\cap T_l\tilde{\Lambda}^{\left(1\right)}\right)=\mbox{dim}\,\mathfrak{a}+\mbox{dim}\,\mathfrak{b}=\mbox{dim}\,\mathfrak{g}.
\]
It remains to see if there exists some relationship between the maps $f_2$ and
$f_3$; its differentials are
\begin{align*}
\dif
f_2^\kappa&=\left(\pi_{\mathfrak{b}^\perp}\left(\mbox{Ad}^\sharp_g\left(\dif\sigma+\mbox{ad}^\sharp_\lambda\sigma\right)\right)\right)\left(\kappa\right)\cr
&=\dif\sigma\left(\mbox{Ad}_{g^{-1}}\pi_{\mathfrak{a}}\left(\kappa\right)\right)-\left(\mbox{ad}^\sharp_{\mbox{Ad}_{g^{-1}}\pi_{\mathfrak{a}}\left(\kappa\right)}\sigma\right)\left(\lambda\right)\cr
\dif f_3^\omega&=\dif\sigma\left(\pi_{\mathfrak{b}}\left(\omega\right)\right).
\end{align*}
We want to note that if, for example, $\sigma$ is a central element, then it could happened that these differentials will be linearly dependent. We can therefore assume the following hypothesis, valid for example when te factorization comes from the Iwasawa decomposition:
\begin{quote}
\emph{For any $g\in G$, if
  $X\in\mathfrak{b}\cap\mbox{Ad}_g\mathfrak{a}$, then $X=0$.}
\end{quote}
We are then sure that our functions are linearly independent; then there are $\mbox{dim}\,\mathfrak{a}$ independent functions of type $f_2$ and
$\mbox{dim}\,\mathfrak{b}$ of $f_3$ type. Therefore
\[
\mbox{dim}\,\tilde{\Lambda}^{\left(1\right)}=\mbox{dim}\,L_{t_0}-\mbox{dim}\,\mathfrak{g}-\mbox{dim}\,\mathfrak{a}-\mbox{dim}\,\mathfrak{b},
\]
and putting all these thing together
\[
\mbox{dim}\,\left(TN\right)^\perp=2\mbox{dim}\,\mathfrak{g}+\mbox{dim}\,\mathfrak{a}+\mbox{dim}\,\mathfrak{b}=3\mbox{dim}\,\mathfrak{g}.
\]
Finally by writing $X_{f^\phi_k}:=\left(\xi^\phi_k,\delta
  J^\phi_k,\delta\alpha^\phi_k,\delta\beta^\phi_k,\delta\sigma^\phi_k\right)$
for the hamiltonian vector field associated to the function $f^\phi_k$,
it results that
\[
X_{f^\phi_k}\lrcorner\omega_0=\left(\delta\sigma^\phi_k+\mbox{ad}_{\xi^\phi_k}^\sharp\sigma\right)\left(\lambda\right)-\dif\sigma\left(\xi^\phi_k\right).
\]
This will implies the following expressions for the hamiltonian fields:
\begin{align*}
X_{f_2^\kappa}&=\left(-\mbox{Ad}_{g^{-1}}\pi_{\mathfrak{a}}\left(\kappa\right),\delta
  J^\kappa_2,\delta\gamma^\kappa_2,\delta\chi^\kappa_2,0\right)\cr
X_{f_3^\omega}&=\left(\pi_{\mathfrak{b}}\left(\omega\right),\delta J^\omega_3,\delta\gamma^\omega_3,\delta\chi^\omega_3,-\mbox{ad}^\sharp_{\pi_{\mathfrak{b}}\left(\omega\right)}\sigma\right)
\end{align*}
(where $\delta\gamma_k^\phi,\delta\chi_k^\phi$ are arbitrary $\mathfrak{a}$ and
$\mathfrak{b}-$valued functions, $\delta J_k^\phi$
a $\mathfrak{g}-$valued function also arbitrary) while the maps $f_1$ has no associated hamiltonian vector fields (we are working with a presymplectic structure!). As we have
$3\mbox{dim}\,\mathfrak{g}$ linearly independent vector fields here, and they belong to $\left(T\tilde{\Lambda}^{\left(1\right)}\right)^\perp$,
we have proved the lemma.
\end{proof}

Let us now apply the invariance condition; this involves to find the points $l\in\tilde{\Lambda}^{\left(1\right)}$ where the components of $\left.\dif H\right|_l$ in the directions of $\left(T_lL_{t_0}\right)^\perp$ annihilates. Now if $l=\left(g,J;\alpha,\beta;\sigma\right)\in\tilde{\Lambda}^{\left(1\right)}$
\begin{align*}
\left.X_{f_2^\kappa}\right|_l\cdot H&=-\frac{1}{2}\sigma\left(\delta J_2^\kappa\right)+\frac{1}{2}\mathsf{B}\left(\delta J_2^\kappa,J\right)-\mbox{Ad}^\sharp_g\left(\mbox{ad}^\sharp_{\mbox{Ad}_{g^{-1}}\pi_{\mathfrak{a}}\left(\kappa\right)}J\right)\left(\alpha\right)\cr
&=\sigma\left(\left[\mbox{Ad}_{g^{-1}}\pi_{\mathfrak{a}}\left(\kappa\right),\mbox{Ad}_{g^{-1}}\alpha\right]\right)\cr
&=\left(\mbox{Ad}_{g}^\sharp\sigma\right)\left(\left[\pi_{\mathfrak{a}}\left(\kappa\right),\alpha\right]\right)\cr
&=\left(\pi_{\mathfrak{b}^0}\left(\mbox{Ad}_{g}^\sharp\sigma\right)\right)\left(\left[\pi_{\mathfrak{a}}\left(\kappa\right),\alpha\right]\right)\cr
&=\mu\left(\left[\pi_{\mathfrak{a}}\left(\kappa\right),\alpha\right]\right)\cr
&=\left(\mbox{ad}^\sharp_{\alpha}\mu\right)\left(\pi_{\mathfrak{a}}\left(\kappa\right)\right),
\end{align*}
so the stability of the constraints defined by $f_2$ forces $\alpha$ to belong to $\mathfrak{a}_\mu$, the isotropy group of the element $\mu\in\mathfrak{b}^0$ respect to the action\footnote{It is a true action!}
\[
a\in A\mapsto\pi_{\mathfrak{b}^0}\left(\mbox{Ad}^\sharp_a\mu\right).
\]
Moreover for the constraints $f_3$, and supposing that $\mbox{ad}^\sharp_\zeta\zeta^\flat=0$ for all $\zeta\in\mathfrak{g}$
\begin{align*}
\left.X_{f_3^\omega}\right|_l\cdot H&=-\frac{1}{2}\sigma\left(\delta J_3^\omega\right)+\frac{1}{2}\mathsf{B}\left(\delta J_3^\omega,J\right)+\frac{1}{2}\left(\mbox{ad}^\sharp_{\pi_{\mathfrak{b}}\left(\omega\right)}\sigma\right)\left(J\right)+\cr
&\qquad+\frac{1}{2}\left(\mbox{ad}^\sharp_{\pi_{\mathfrak{b}}\left(\omega\right)}\sigma\right)\left(J\right)-\left(\mbox{Ad}^\sharp_g\mbox{ad}^\sharp_{\pi_{\mathfrak{b}}\left(\omega\right)}\sigma\right)\left(\alpha\right)-\cr
&\qquad\qquad-\left(\mbox{ad}^\sharp_{\pi_{\mathfrak{b}}\left(\omega\right)}\sigma\right)\left(\beta\right)+\left(\mbox{Ad}^\sharp_g\mbox{ad}^\sharp_{\pi_{\mathfrak{b}}\left(\omega\right)}\sigma\right)\left(\alpha\right)\cr
&=-\left(\mbox{ad}^\sharp_{\pi_{\mathfrak{b}}\left(\omega\right)}\sigma\right)\left(\beta\right)\cr
&=\sigma\left(\left[\pi_{\mathfrak{b}}\left(\omega\right),\beta\right]\right)\cr
&=\pi_{\mathfrak{a}^0}\left(\sigma\right)\left(\left[\pi_{\mathfrak{b}}\left(\omega\right),\beta\right]\right)\cr
&=\left(\mbox{ad}^\sharp_{\beta}\left(\pi_{\mathfrak{a}^0}\left(J\right)\right)\right)\left(\pi_{\mathfrak{b}}\left(\omega\right)\right)\cr
&=\left(\pi_{\mathfrak{a}^0}\left(\mbox{ad}^\sharp_{\beta}\nu\right)\right)\left(\pi_{\mathfrak{b}}\left(\omega\right)\right)
\end{align*}
deducing that $\beta$ must live in the Lie algebra of the isotropy group of $\nu$ respect to the action\footnote{See previous footnote!}
\[
b\in B\mapsto\pi_{\mathfrak{a}^0}\left(\mbox{Ad}^\sharp_b\nu\right)
\]
for all $\nu\in\mathfrak{a}^0$.
\newline
Then it will arise new constraints, the secondary constraints surface being
\[
\tilde{\Lambda}^{\left(2\right)}:=\{\left(g,J;\alpha,\beta;\sigma\right)\in\tilde{\Lambda}^{\left(1\right)}:\alpha\in\mathfrak{a}_{\mu},\beta\in\mathfrak{b}_{\nu}\}.
\]
So we have the following result.
\begin{prop}
$\tilde{\Lambda}^{\left(2\right)}$ is invariant.
\end{prop}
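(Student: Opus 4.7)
The statement asserts that the Gotay--Nester--Hinds algorithm terminates at $\tilde{\Lambda}^{(2)}$; concretely, one must verify that $X\cdot H=0$ for every $l\in\tilde{\Lambda}^{(2)}$ and every $X\in(T_l\tilde{\Lambda}^{(2)})^{\perp}$, i.e.\ that $\tilde{\Lambda}^{(3)}=\tilde{\Lambda}^{(2)}$.

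My approach is to reduce the claim to the preceding secondary-constraint step by comparing the symplectic orthogonals at the two levels. The decisive observation is that the presymplectic form
\[
\omega_{0}=\left\langle d\sigma\stackrel{\wedge}{,}\lambda\right\rangle -\tfrac{1}{2}\sigma\left(\left[\lambda\stackrel{\wedge}{,}\lambda\right]\right)
\]
only involves the differentials of the $(g,\sigma)$ variables, so $\omega_{0}(X,Y)$ depends only on the $\xi$ and $\delta\sigma$ components of $X$ and $Y$; the $\delta J$, $\delta\alpha$ and $\delta\beta$ components all sit inside $\ker\omega_{0}$, as was already used in Lemma~\ref{LemaBasePerp}.

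The first step is to establish that $(T_{l}\tilde{\Lambda}^{(2)})^{\perp}=(T_{l}\tilde{\Lambda}^{(1)})^{\perp}$ for every $l\in\tilde{\Lambda}^{(2)}$. The inclusion $\supseteq$ is automatic because $T_{l}\tilde{\Lambda}^{(2)}\subseteq T_{l}\tilde{\Lambda}^{(1)}$; for the reverse inclusion the cleanest route is a dimension count via $\dim(TN)^{\perp}=\dim L_{t_{0}}-\dim N+\dim(\ker\omega_{0}\cap TN)$. Passing from $\tilde{\Lambda}^{(1)}$ to $\tilde{\Lambda}^{(2)}$ drops the dimension by $\dim\mathcal{O}^{A}_{\mu}+\dim\mathcal{O}^{B}_{\nu}$, and the intersection $\ker\omega_{0}\cap T\tilde{\Lambda}^{(2)}$ drops by exactly the same amount (since $\ker\omega_{0}$ is spanned by $\delta J,\delta\alpha,\delta\beta$ directions, on which the new conditions merely restrict $\delta\alpha\in\mathfrak{a}_{\mu}$, $\delta\beta\in\mathfrak{b}_{\nu}$), so both perpendiculars have dimension $3\dim\mathfrak{g}$, matching the count already performed in Lemma~\ref{LemaBasePerp}.

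The second step is then immediate: by the very definition of $\tilde{\Lambda}^{(2)}$, every $l$ there satisfies $X\cdot H=0$ for all $X\in(T_{l}\tilde{\Lambda}^{(1)})^{\perp}$, so the equality from the previous step gives the same vanishing for all $X\in(T_{l}\tilde{\Lambda}^{(2)})^{\perp}$. The main obstacle, as one would expect in a presymplectic setting, is pinning down the equality $(T_{l}\tilde{\Lambda}^{(2)})^{\perp}=(T_{l}\tilde{\Lambda}^{(1)})^{\perp}$: the secondary constraints $\alpha\in\mathfrak{a}_{\mu}$ and $\beta\in\mathfrak{b}_{\nu}$ are functions of coordinates lying entirely in $\ker\omega_{0}$ and therefore admit no associated Hamiltonian vector field, which is precisely why they fail to enlarge the orthogonal and why the algorithm stabilizes at this stage.
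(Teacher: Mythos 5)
Your proposal is correct and follows essentially the same route as the paper: a dimension count using $\dim(TN)^{\perp}=\dim L_{t_0}-\dim N+\dim(\ker\omega_0\cap TN)$ to show that both drops (in the dimension of the constraint surface and in $\ker\omega_0\cap TN$) cancel, forcing $(T_l\tilde{\Lambda}^{(2)})^{\perp}=(T_l\tilde{\Lambda}^{(1)})^{\perp}$ and hence termination of the algorithm by the very definition of $\tilde{\Lambda}^{(2)}$. The only cosmetic difference is that the paper phrases the count in terms of $\dim\mathfrak{a}-\dim\mathfrak{a}_\mu$ and $\dim\mathfrak{b}-\dim\mathfrak{b}_\nu$ rather than orbit dimensions, which is the same thing.
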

\begin{proof}
Let us take $l:=\left(g,\sigma;\alpha,\beta,\sigma\right)\in\tilde{\Lambda}^{\left(2\right)}$. Therefore $\left(T_lL_{t_0}\right)^\perp=\{\left(\xi,\delta J,\delta\alpha,\delta\beta;\delta\sigma\right)\in T_lL_{t_0}:\delta\sigma=\xi=0\}$; then an element $X$ of $\left(T_lL_{t_0}\right)^\perp\cap\left(T_l\tilde{\Lambda}^{\left(2\right)}\right)$ can be written as
\[
X=\left(0,0;\delta\alpha,\delta\beta;0\right)
\]
with $\delta\alpha\in\mathfrak{a}_\mu,\delta\beta\in\mathfrak{b}_\nu$. This means that $\mbox{dim}\,\left(\left(T_lL_{t_0}\right)^\perp\cap\left(T_l\tilde{\Lambda}^{\left(2\right)}\right)\right)=\mbox{dim}\,\mathfrak{a}_\mu+\mbox{dim}\,\mathfrak{b}_\nu$. Moreover
\[
\mbox{dim}\,\left(\tilde{\Lambda}^{\left(2\right)}\right)=\mbox{dim}\,\left(\tilde{\Lambda}^{\left(1\right)}\right)-\left(\mbox{dim}\,\left(\mathfrak{a}\right)-\mbox{dim}\,\left(\mathfrak{a}_\mu\right)+\mbox{dim}\,\left(\mathfrak{b}\right)-\mbox{dim}\,\left(\mathfrak{b}_\nu\right)\right)
\]
and thus
\[
\mbox{dim}\,\left(T_l\tilde{\Lambda}^{\left(2\right)}\right)^\perp=\mbox{dim}\,\left(L_{t_0}\right)-\mbox{dim}\,\left(\tilde{\Lambda}^{\left(1\right)}\right)+\mbox{dim}\,\left(\mathfrak{a}\right)+\mbox{dim}\,\left(\mathfrak{b}\right).
\]
As we see before $\mbox{dim}\,\left(L_{t_0}\right)-\mbox{dim}\,\left(\tilde{\Lambda}^{\left(1\right)}\right)=2\mbox{dim}\,\mathfrak{g}$, and so $\mbox{dim}\,\left(T_l\tilde{\Lambda}^{\left(2\right)}\right)^\perp=3\mbox{dim}\,\mathfrak{g}$. We know that $\left(T_l\tilde{\Lambda}^{\left(1\right)}\right)^\perp\subset\left(T_l\tilde{\Lambda}^{\left(2\right)}\right)^\perp$ because $\tilde{\Lambda}^{\left(2\right)}$ is a submanifold of $\tilde{\Lambda}^{\left(1\right)}$; both have the same dimension, so they will be equal. In the lemma \ref{LemaBasePerp} we obtained the basis
\begin{multline*}
\bigg\{X_1^\kappa:=\left(-\mbox{Ad}_{g^{-1}}\kappa,0;0,0;0\right),X_2^\omega:=\left(\omega,0;0,0;-\mbox{ad}^\sharp_\omega\sigma\right),\widehat{\delta J}:=\left(0,\delta J;0,0;0\right),\cr
,\widehat{\delta\alpha}:=\left(0,0;\delta\alpha,0;0\right),\widehat{\delta\beta}:=\left(0,0;0,\delta\beta;0\right):\kappa,\delta\alpha\in\mathfrak{a},\omega,\delta\beta\in\mathfrak{b},\delta J\in\mathfrak{g}\bigg\}
\end{multline*}
for $\left(T_l\tilde{\Lambda}^{\left(1\right)}\right)^\perp$. Then there are no more constraints, because $\tilde{\Lambda}^{\left(2\right)}$ is defined as the subset of $\tilde{\Lambda}^{\left(1\right)}$ where $H$ is invariant by elements of $\left(T\tilde{\Lambda}^{\left(1\right)}\right)^{\perp}$.
\end{proof}

\subsection{Gauge fixing and reduction}
From the calculations made above we have that $\tilde{\Lambda}^{\left(2\right)}$ is a presymplectic manifold, with presymplectic form $\omega_2:=\left.\omega_0\right|\tilde{\Lambda}^{\left(2\right)}$ and this form has the kernel
\begin{align*}
\mbox{Ker}\,\left.\omega_2\right|_l&=\left(T_l\tilde{\Lambda}^{\left(2\right)}\right)^\perp\cap\left(T_l\tilde{\Lambda}^{\left(2\right)}\right).
\end{align*}
Now an arbitrary element of $\left(T_l\tilde{\Lambda}^{\left(2\right)}\right)^\perp$ can be written as
\[
X^\perp:=\left(-\mbox{Ad}_{g^{-1}}\kappa+\omega,\delta J;\delta\alpha,\delta\beta;-\mbox{ad}_\omega^\sharp\sigma\right)
\]
for some $\kappa,\delta\alpha\in\mathfrak{a},\omega,\delta\beta\in\mathfrak{b}$ y $\delta J\in\mathfrak{g}$. Because
\begin{multline*}
T_l\tilde{\Lambda}^{\left(2\right)}=\bigg\{\left(\xi,\delta\sigma;\delta\alpha,\delta\beta;\delta\sigma\right):\delta\alpha\in\mathfrak{a}_\mu,\delta\beta\in\mathfrak{b}_\nu,\cr
,\pi_{\mathfrak{b}^0}\left(\mbox{Ad}^\sharp_g\left(\delta\sigma+\mbox{ad}_\xi^\sharp\sigma\right)\right)=0,\pi_{\mathfrak{a}^0}\left(\delta\sigma\right)=0\bigg\}
\end{multline*}
then $X^\perp$ will belong to the intersection of these spaces if and only if $\delta\alpha\in\mathfrak{a}_\mu,\delta\beta\in\mathfrak{b}_\nu$, $\left(\delta J\right)^\flat=-\mbox{ad}^\sharp_\omega\sigma$ and moreover
\begin{align*}
0&=\pi_{\mathfrak{b}^0}\left(\mbox{Ad}^\sharp_g\left(-\mbox{ad}^\sharp_\omega\sigma+\mbox{ad}_{-\mbox{Ad}_{g^{-1}}\kappa+\omega}^\sharp\sigma\right)\right)\cr
&=-\pi_{\mathfrak{b}^0}\left(\mbox{Ad}^\sharp_g\left(\mbox{ad}_{\mbox{Ad}_{g^{-1}}\kappa}^\sharp\sigma\right)\right)\cr
&=-\pi_{\mathfrak{b}^0}\left(\mbox{ad}_{\kappa}^\sharp\left(\mbox{Ad}_{g}^\sharp\sigma\right)\right)\cr
0&=-\pi_{\mathfrak{a}^0}\left(\mbox{ad}^\sharp_\omega\sigma\right).
\end{align*}
Taking into account that $\mathfrak{a}^0$ is invariant by the $\mbox{ad}^\sharp-$action of $\mathfrak{a}$, and the same is true for $\mathfrak{b}^0$ respect to $\mathfrak{b}$, these conditions are equivalent to
\begin{align*}
0&=-\pi_{\mathfrak{b}^0}\left(\mbox{ad}_{\kappa}^\sharp\pi_{\mathfrak{b}^0}\left(\mbox{Ad}_{g}^\sharp\sigma\right)\right)\cr
&=-\pi_{\mathfrak{b}^0}\left(\mbox{ad}_{\kappa}^\sharp\mu\right)\cr
0&=-\pi_{\mathfrak{a}^0}\left(\mbox{ad}^\sharp_\omega\pi_{\mathfrak{a}^0}\sigma\right)\cr
&=-\pi_{\mathfrak{a}^0}\left(\mbox{ad}^\sharp_\omega\nu\right)
\end{align*}
because $\kappa\in\mathfrak{a},\omega\in\mathfrak{b}$. Therefore $\kappa\in\mathfrak{a}_\mu,\omega\in\mathfrak{b}_\nu$ ensures us, together with the conditions found above, that $X^\perp$ is tangent to $\tilde{\Lambda}^{\left(2\right)}$.
We can interpret the singular directions  $\widehat{\delta\alpha},\widehat{\delta\beta}$ as infinitesimal generators of an action of the abelian group $\mathfrak{a}_\mu\times\mathfrak{b}_\nu$ on $\tilde{\Lambda}^{\left(2\right)}$ via
\[
\left(\alpha',\beta'\right)\cdot\left(g,\sigma;\alpha,\beta;\sigma\right)=\left(g,\sigma;\alpha+\alpha',\beta+\beta';\sigma\right).
\]
By defining $\Lambda_{\mu\nu}:=\left\{\left(g,\sigma\right)\in G\times\mathfrak{g}:\pi_{\mathfrak{b}^0}\left(\mbox{Ad}_g^\sharp\sigma\right)=\mu,\pi_{\mathfrak{a}^0}\left(\sigma\right)=\nu\right\}$ we see that the projection $\Pi':\left(g,\sigma;\alpha,\beta;\sigma\right)\in\tilde{\Lambda}^{\left(2\right)}\mapsto\left(g,\sigma\right)\in\Lambda_{\mu\nu}$ induces a diffeomorphism between the presymplectic manifolds $\tilde{\Lambda}^{\left(2\right)}/\left(\mathfrak{a}_\mu\times\mathfrak{b}_\nu\right)$ and $\Lambda_{\mu\nu}$, the latter with the structure induced by considering it as submanifold of $G\times\mathfrak{g}^*$ with symplectic structure
\[
\omega:=\left<\dif\sigma\stackrel{\wedge}{,}\lambda\right>+\sigma\left(\dif\lambda\right),
\]
which is the canonical one in $G\times\mathfrak{g}^*$. Then $\Lambda_{\mu\nu}$ is a presymplectic map with the singular directions in $\left(g,\sigma\right)$ generated by vectors 
\[
X\left(\kappa,\omega\right):=\left(\mbox{Ad}_{g^{-1}}\kappa-\omega,\mbox{ad}^\sharp_\omega\sigma\right)
\]
for all
$\left(\kappa,\omega\right)\in\mathfrak{a}_\mu\times\mathfrak{b}_\nu$.

\begin{lem}
Let $\xi^{L,R}$ be the infinitesimal genrator associated to the lift of the left traslations (for the superindex $L$) and the right traslations (for the other superindex $R$) in $G$ to $G\times\mathfrak{g}^*$, considered as the cotangent bundle of $G$ via left trivialization. Then $X\left(\kappa,\omega\right)=\left.\left(\kappa^L+\omega^R\right)\right|_{\left(g,\sigma\right)}$. 
\end{lem}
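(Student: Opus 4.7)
The strategy is to compute each summand $\kappa^L$ and $\omega^R$ separately by first writing down the cotangent lifts of $\mathbf{L}_a:g\mapsto ag$ and of $\mathbf{R}_{b^{-1}}:g\mapsto gb^{-1}$ in left trivialisation, and then differentiating along the one-parameter subgroups $a=\exp(t\kappa)$, $b=\exp(t\omega)$.

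First, I would settle the easy summand. Left translation $\mathbf{L}_a$ commutes with the left trivialisation map in an obvious way: the covector at $g$ represented by $\sigma\in\mathfrak{g}^*$ equals $\sigma\circ(\mathbf{L}_{g^{-1}})_{*}$, and its pushforward by $(\mathbf{L}_a)_{*}^{-1}=(\mathbf{L}_{a^{-1}})_{*}$ is $\sigma\circ(\mathbf{L}_{(ag)^{-1}})_{*}$, so the lifted action is $(g,\sigma)\mapsto(ag,\sigma)$. Differentiating at $t=0$ with $a=\exp(t\kappa)$ and using the well-known identity $(\mathbf{L}_{g^{-1}})_{*}\kappa^{R}|_{g}=\mathrm{Ad}_{g^{-1}}\kappa$ (left trivialisation of a right-invariant vector field) gives
\[
\left.\kappa^{L}\right|_{(g,\sigma)}=\bigl(\mathrm{Ad}_{g^{-1}}\kappa,\,0\bigr).
\]

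The second summand is the one that carries the real content. For $\mathbf{R}_{b^{-1}}:g\mapsto gb^{-1}$, the cotangent lift in left trivialisation sends $(g,\sigma)$ to $(gb^{-1},\sigma'')$ with $\sigma''\circ(\mathbf{L}_{bg^{-1}})_{*}|_{gb^{-1}}=\sigma\circ(\mathbf{L}_{g^{-1}})_{*}\circ(\mathbf{R}_{b})_{*}|_{gb^{-1}}$. Using the commutation $\mathbf{L}_{g^{-1}}\circ\mathbf{R}_{b}=\mathbf{R}_{b}\circ\mathbf{L}_{g^{-1}}$ and the factorisation $\mathbf{L}_{bg^{-1}}=\mathbf{L}_{b}\circ\mathbf{L}_{g^{-1}}$, both sides can be pulled back to a statement at $e$, yielding
\[
\sigma''=\sigma\circ(\mathbf{R}_{b})_{*}\circ(\mathbf{L}_{b^{-1}})_{*}\big|_{e}=\sigma\circ\mathrm{Ad}_{b^{-1}}=\mathrm{Ad}^{\sharp}_{b}\sigma,
\]
with the last equality coming from the convention $(\mathrm{Ad}^{\sharp}_{g}\phi)(\xi)=\phi(\mathrm{Ad}_{g^{-1}}\xi)$ fixed in the paper. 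Thus the lifted action reads $(g,\sigma)\mapsto(gb^{-1},\mathrm{Ad}^{\sharp}_{b}\sigma)$.

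Finally I would take $b=\exp(t\omega)$, differentiate at $t=0$, and read off the two components: on $G$, $\frac{d}{dt}|_{0}g\exp(-t\omega)=-\omega$ in left trivialisation; on $\mathfrak{g}^{*}$, $\frac{d}{dt}|_{0}(\mathrm{Ad}^{\sharp}_{\exp(t\omega)}\sigma)(\xi)=\sigma([\xi,\omega])=(\mathrm{ad}^{\sharp}_{\omega}\sigma)(\xi)$. Therefore
\[
\left.\omega^{R}\right|_{(g,\sigma)}=\bigl(-\omega,\,\mathrm{ad}^{\sharp}_{\omega}\sigma\bigr),
\]
and adding the two contributions gives exactly $X(\kappa,\omega)=(\mathrm{Ad}_{g^{-1}}\kappa-\omega,\,\mathrm{ad}^{\sharp}_{\omega}\sigma)$. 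The only delicate point is bookkeeping the signs and the duality convention: one must use consistently that the lift of an action is the contragredient (pullback by inverse), and that the $\mathrm{ad}^{\sharp}$ of the paper carries the sign coming from differentiating $\mathrm{Ad}_{\exp(-t\omega)}$. No further analytic obstacle appears.
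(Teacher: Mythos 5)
Your computation is correct, and it is worth noting that the paper states this lemma without any proof at all, so your argument supplies exactly what is left implicit there. The direct calculation of the two cotangent lifts in left trivialization --- $(g,\sigma)\mapsto(ag,\sigma)$ for left translation and $(g,\sigma)\mapsto(gb^{-1},\mathrm{Ad}^{\sharp}_{b}\sigma)$ for right translation, then differentiation along $\exp(t\kappa)$ and $\exp(t\omega)$ --- is the standard route, and your signs are consistent with the paper's conventions $(\mathrm{Ad}^{\sharp}_{g}\phi)(\xi)=\phi(\mathrm{Ad}_{g^{-1}}\xi)$ and $(\mathrm{ad}^{\sharp}_{\kappa}\phi)(\xi)=\phi([\xi,\kappa])$, yielding precisely $X(\kappa,\omega)=(\mathrm{Ad}_{g^{-1}}\kappa-\omega,\;\mathrm{ad}^{\sharp}_{\omega}\sigma)$ as defined in the preceding paragraph of the paper.
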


This lemma gives sense to the AKS scheme from Marsden-Weinstein reduction viewpoint: The group $G$ can be considered as a $A\times B$-space via the action
\[
A\times B\times G\rightarrow G:\left(a,b;g\right)\mapsto agb^{-1},
\]
and so $G\times\mathfrak{g}^*$, if it is identified with $T^*G$ via left trivialization. It is immediate to calculate the momentum map using the fact that our symplectic structure is exact and the action is lifted; we obtain that
\begin{align*}
J:&G\times\mathfrak{g}^*\rightarrow\mathfrak{b}^0\times\mathfrak{a}^0\cr
&\left(g,\sigma\right)\mapsto\left(\pi_{\mathfrak{b}^0}\left(\mbox{Ad}^\sharp_g\sigma\right),\pi_{\mathfrak{a}^0}\left(\sigma\right)\right).
\end{align*}
Thus we will have that $\Lambda_{\mu\nu}=J^{-1}\left(\mu,\nu\right)$; using reduction theory (see \cite{A-M}) the map from $\Lambda_{\mu\nu}$ to $\Lambda_{\mu\nu}/A_\mu\times B_\nu$ is presymplectic, and the solution curves of the dynamical system defined there by the invaint hamiltonian $H\left(g,\sigma\right)=\frac{1}{2}\sigma\left(\sigma^\flat\right)$ are in $1-1$ relation with the evolution curves of the system induced on the quotient. To work with these equations, let us introduce a convenient coordinate system; now the map
\begin{align*}
M_{\mu\nu}:\Lambda_{\mu\nu}\rightarrow\mathcal{O}^A_\mu\times\mathcal{O}^B_\nu:\left(g,\sigma\right)&\mapsto\left(\pi_{\mathfrak{b}^0}\left(\mbox{Ad}^\sharp_{g_A^{-1}}\mu\right),\pi_{\mathfrak{a}^0}\left(\mbox{Ad}^\sharp_{g_B}\nu\right)\right)\cr
&\mapsto\left(\pi_{\mathfrak{b}^0}\left(\mbox{Ad}^\sharp_{g_B}\sigma\right),\pi_{\mathfrak{a}^0}\left(\mbox{Ad}^\sharp_{g_B}\sigma\right)\right)
\end{align*}
iff $g=g_Ag_B$, induces a diffeomorphism in the quotient $\Lambda_{\mu\nu}/\left(A_\mu\times B_\nu\right)$. So if $\left(g,\xi;\sigma,\eta\right)$ is a vector tangent to $G\times\mathfrak{g}^*$ (as always, the trivializations are left trivializations) the derivative of $M_{\mu\nu}$ has the following form
\begin{multline}\label{DerivadaLmunu}
\left.\left(M_{\mu\nu}\right)_*\right|_{\left(g,\sigma\right)}\left(g,\sigma;\xi,\eta\right)=\left(-\pi_{\mathfrak{b}^0}\left(\mbox{ad}^\sharp_{\xi_A}\mbox{Ad}^\sharp_{g_A^{-1}}\mu\right),\pi_{\mathfrak{a}^0}\left(\mbox{ad}^\sharp_{\xi_B}\mbox{Ad}^\sharp_{g_B}\nu\right)\right)\cr
=\left(\pi_{\mathfrak{b}^0}\left(\mbox{ad}^\sharp_{\xi_B}\mbox{Ad}^\sharp_{g_B}\sigma\right)+\pi_{\mathfrak{b}^0}\left(\mbox{Ad}^\sharp_{g_B}\eta\right),\pi_{\mathfrak{a}^0}\left(\mbox{ad}^\sharp_{\xi_B}\mbox{Ad}^\sharp_{g_B}\sigma\right)+\pi_{\mathfrak{a}^0}\left(\mbox{Ad}^\sharp_{g_B}\eta\right)\right)
\end{multline}
if and only if $g=g_Ag_B,\xi_A=\pi_{\mathfrak{a}}\left(\mbox{Ad}_{g_B}\xi\right),\xi_B=\pi_{\mathfrak{b}}\left(\mbox{Ad}_{g_B}\xi\right)$. So we have the next surprinsing result.

\begin{prop}
Let $\mathcal{O}^A_\mu\times\mathcal{O}^B_\nu$ be the symplectic manifold dressed with the product symplectic structure $\omega_{\mu\nu}=\omega_\mu-\omega_\nu$, where $\omega_{\mu,\nu}$ are the corresponding Kirillov-Kostant symplectic structures on every orbit. If $i_{\mu\nu}:\Lambda\hookrightarrow G\times\mathfrak{g}^*$ is the immersion, then 
\[
i_{\mu\nu}^*\omega=M_{\mu\nu}^*\omega_{\mu\nu}.
\]
\end{prop}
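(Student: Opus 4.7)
The plan is to evaluate both sides of the asserted equality on an arbitrary pair of vectors tangent to $\Lambda_{\mu\nu}$ at a point $(g,\sigma)$ and to check that the two bilinear forms coincide. Using the formula for the canonical symplectic structure already appearing in the text,
\[
\omega=\left<\dif\sigma\stackrel{\wedge}{,}\lambda\right>-\frac{1}{2}\sigma\left(\left[\lambda\stackrel{\wedge}{,}\lambda\right]\right),
\]
a pair of left-trivialized tangent vectors $(g,\xi_i;\sigma,\eta_i)$, $i=1,2$, at $(g,\sigma)$ pairs to
\[
\omega\bigl((g,\xi_1;\sigma,\eta_1),(g,\xi_2;\sigma,\eta_2)\bigr)=\eta_1(\xi_2)-\eta_2(\xi_1)-\sigma([\xi_1,\xi_2]).
\]
Specializing to vectors tangent to $\Lambda_{\mu\nu}$ imposes the infinitesimal constraints $\pi_{\mathfrak{a}^0}(\eta_i)=0$ and $\pi_{\mathfrak{b}^0}\bigl(\mathrm{Ad}^\sharp_g\eta_i+\mathrm{ad}^\sharp_{\xi_i}\mathrm{Ad}^\sharp_g\sigma\bigr)=0$; this yields a concrete formula for $i_{\mu\nu}^*\omega$ in terms of the free data $(\xi_i,\eta_i)$.

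Next I would compute $M_{\mu\nu}^*\omega_{\mu\nu}$ on the same pair of vectors using the explicit expression \eqref{DerivadaLmunu} for $(M_{\mu\nu})_*$. Because the Kirillov--Kostant form on $\mathcal{O}^A_\mu$ at the point $\omega_1$ reads $\omega_\mu(\mathrm{ad}^\sharp_{\kappa_1}\omega_1,\mathrm{ad}^\sharp_{\kappa_2}\omega_1)=\omega_1([\kappa_1,\kappa_2])$ (restricted to the $A$-action, so to $\kappa_i\in\mathfrak{a}$), and similarly for $\omega_\nu$ on $\mathcal{O}^B_\nu$, each summand becomes a bracket expression paired against $\omega_1$ or $\omega_2$. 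I would then use the observation already exploited in the text that on $\Lambda_{\mu\nu}$ one has $\mathrm{Ad}^\sharp_{g_B}\sigma=\omega_1+\omega_2$, together with the identities $\xi_A=\pi_{\mathfrak{a}}(\mathrm{Ad}_{g_B}\xi)$, $\xi_B=\pi_{\mathfrak{b}}(\mathrm{Ad}_{g_B}\xi)$, to translate everything back into the variables $(\xi_i,\eta_i)$.

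The comparison then reduces to an algebraic identity between brackets in $\mathfrak{g}=\mathfrak{a}\oplus\mathfrak{b}$ paired against $\mathrm{Ad}^\sharp_{g_B}\sigma$. The invariance of the Killing-type form under $\mathrm{Ad}$ allows one to peel off $g_B$ from both slots simultaneously, and the subalgebra conditions $[\mathfrak{a},\mathfrak{a}]\subset\mathfrak{a}$, $[\mathfrak{b},\mathfrak{b}]\subset\mathfrak{b}$ together with the pairing rules $\mathfrak{a}\perp\mathfrak{a}^0$ and $\mathfrak{b}\perp\mathfrak{b}^0$ cause a number of cross terms to vanish when paired against $\mu\in\mathfrak{b}^0$ and $\nu\in\mathfrak{a}^0$. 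The minus sign appearing in $\omega_{\mu\nu}=\omega_\mu-\omega_\nu$ is precisely what is needed to recover the cross term $-\sigma([\xi_1,\xi_2])$ of $i_{\mu\nu}^*\omega$.

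The main obstacle is the bookkeeping: one has several $\mathrm{Ad}^\sharp$ and $\mathrm{ad}^\sharp$ factors, two projections $\pi_{\mathfrak{a}^0},\pi_{\mathfrak{b}^0}$, and a $g=g_+g_-$ factorization to carry through a single computation, and the correct grouping of terms (so that the $\mathfrak{a}$-part brackets pair against $\mu$ and the $\mathfrak{b}$-part brackets pair against $\nu$, with everything else cancelling on the constraint surface) must be made explicit. Once this is done, comparison term by term gives the equality, and since a proof of the same identity (with $L_{\mu\nu}$ in place of $M_{\mu\nu}$) already appears in \cite{capriotti10:_dirac_lie}, the result follows.
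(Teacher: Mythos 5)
Your plan coincides with the paper's own proof: it evaluates $\omega$ on pairs of left-trivialized vectors tangent to $\Lambda_{\mu\nu}$, imposes exactly the infinitesimal constraints you list, decomposes via $g=g_Ag_B$ with $\xi_{A,B}=\pi_{\mathfrak{a},\mathfrak{b}}(\mbox{Ad}_{g_B}\xi)$, and matches the result against the Kirillov--Kostant expression $\omega_1([\xi_1,\xi_2])-\omega_2([\zeta_1,\zeta_2])$ transported through $(M_{\mu\nu})_*$. All the ingredients and cancellations you identify are the ones actually used, so the proposal is correct and essentially identical in approach.
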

\begin{proof}
If $\left(\omega_1,\omega_2\right)=\left(\pi_{\mathfrak{b}^0}\left(\mbox{Ad}^\sharp_a\mu\right),\pi_{\mathfrak{a}^0}\left(\mbox{Ad}^\sharp_b\nu\right)\right)$ is an arbitrary element of $\mathcal{O}^A_\mu\times\mathcal{O}^B_\nu$, then the tangent space in this point is given by
\[
T_{\left(\omega_1,\omega_2\right)}\left(\mathcal{O}^A_\mu\times\mathcal{O}^B_\nu\right)=\left\{\left(\pi_{\mathfrak{b}^0}\left(\mbox{ad}^\sharp_\xi\omega_1\right),\pi_{\mathfrak{a}^0}\left(\mbox{ad}^\sharp_\zeta\omega_2\right)\right):\xi\in\mathfrak{a},\zeta\in\mathfrak{b}\right\}.
\]
The symplectic structure $\omega_{\mu\nu}$ in these terms can be written as
\begin{multline}\label{EstrSimplecticaSobreProductoOrbitas}
\left.\omega_{\mu\nu}\right|_{\left(\omega_1,\omega_2\right)}\left(\left(\pi_{\mathfrak{b}^0}\left(\mbox{ad}^\sharp_{\xi_1}\omega_2\right),\pi_{\mathfrak{a}^0}\left(\mbox{ad}^\sharp_{\zeta_1}\omega_2\right)\right),\left(\pi_{\mathfrak{b}^0}\left(\mbox{ad}^\sharp_{\xi_2}\omega_2\right),\pi_{\mathfrak{a}^0}\left(\mbox{ad}^\sharp_{\zeta_2}\omega_2\right)\right)\right)=\cr
=\omega_1\left(\left[\xi_1,\xi_2\right]\right)-\omega_2\left(\left[\zeta_1,\zeta_2\right]\right).
\end{multline}
Let $\left(g,\sigma;\xi,\eta\right)$ be a vector tangent to $\Lambda_{\mu\nu}$ at $\left(g,\sigma\right)$; then we have that
\begin{equation}\label{CondTangencyLambdamunu}
\begin{cases}
\pi_{\mathfrak{b}^0}\left(\mbox{Ad}^\sharp_g\left(\mbox{ad}^\sharp_\xi\sigma+\eta\right)\right)=0&\cr
\pi_{\mathfrak{a}^0}\left(\eta\right)=0.
\end{cases}
\end{equation}
Because of the factorization $g=g_Ag_B$ the first condition can be written as
\[
\pi_{\mathfrak{b}^0}\left(\mbox{Ad}^\sharp_{g_A}\left(\mbox{ad}^\sharp_{\left(\mbox{Ad}_{g_B}\xi\right)}\left(\mbox{Ad}^\sharp_{g_B}\sigma\right)+\mbox{Ad}^\sharp_{g_B}\eta\right)\right)=0
\]
and so, from the nodegeneracy condition\footnote{That is, $\mbox{Ad}_g^\sharp\mathfrak{a}^0\cap\mathfrak{b}^0=0$ for all $g\in G$.}
\begin{equation}\label{CondTangencyLambdamunu2}
\pi_{\mathfrak{b}^0}\left(\mbox{ad}^\sharp_{\left(\mbox{Ad}_{g_B}\xi\right)}\left(\mbox{Ad}^\sharp_{g_B}\sigma\right)+\mbox{Ad}^\sharp_{g_B}\eta\right)=0.
\end{equation}
Letus now suppose that $\left(g,\sigma;\xi_1,\eta_1\right),\left(g,\sigma;\xi_2,\eta_2\right)\in T_{\left(g,\sigma\right)}\Lambda_{\mu\nu}$; then contracting them with th canonical form we will see that
\begin{multline*}
\left.\omega\right|_{\left(g,\sigma\right)}\left(\left(g,\sigma;\xi_1,\eta_1\right),\left(g,\sigma;\xi_2,\eta_2\right)\right)=\eta_1\left(\xi_2\right)-\eta_2\left(\xi_1\right)-\sigma\left(\left[\xi_1,\xi_2\right]\right)\cr
=\pi_{\mathfrak{b}^0}\left(\mbox{Ad}^\sharp_{g_B}\eta_1\right)\left(\left(\xi_2\right)_A\right)+\pi_{\mathfrak{a}^0}\left(\mbox{Ad}^\sharp_{g_B}\eta_1\right)\left(\left(\xi_2\right)_B\right)-\cr
-\pi_{\mathfrak{b}^0}\left(\mbox{Ad}^\sharp_{g_B}\eta_2\right)\left(\left(\xi_1\right)_A\right)-\pi_{\mathfrak{a}^0}\left(\mbox{Ad}^\sharp_{g_B}\eta_2\right)\left(\left(\xi_1\right)_A\right)-\cr
-\left(\mbox{Ad}^\sharp_{g_B}\sigma\right)\left(\left[\xi_1,\left(\xi_2\right)_A\right]\right)-\left(\mbox{Ad}^\sharp_{g_B}\sigma\right)\left(\left[\left(\xi_1\right)_B,\left(\xi_2\right)_B\right]\right)-\left(\mbox{Ad}^\sharp_{g_B}\sigma\right)\left(\left[\left(\xi_1\right)_A,\left(\xi_2\right)_B\right]\right)\cr
=\pi_{\mathfrak{b}^0}\left(\mbox{Ad}^\sharp_{g_B}\eta_1+\mbox{ad}^\sharp_{\xi_1}\left(\mbox{Ad}^\sharp_{g_B}\sigma\right)\right)\left(\left(\xi_2\right)_A\right)+\pi_{\mathfrak{a}^0}\left(\mbox{Ad}^\sharp_{g_B}\eta_1\right)\left(\left(\xi_2\right)_B\right)-\cr
-\pi_{\mathfrak{b}^0}\left(\mbox{Ad}^\sharp_{g_B}\eta_2+\mbox{ad}^\sharp_{\left(\xi_2\right)_B}\left(\mbox{Ad}^\sharp_{g_B}\sigma\right)\right)\left(\left(\xi_1\right)_A\right)-\pi_{\mathfrak{a}^0}\left(\mbox{Ad}^\sharp_{g_B}\eta_2\right)\left(\left(\xi_1\right)_A\right)-\cr
-\left(\mbox{Ad}^\sharp_{g_B}\sigma\right)\left(\left[\left(\xi_1\right)_B,\left(\xi_2\right)_B\right]\right)
\end{multline*}
where it was used that $\left(\xi_k\right)_{A,B}=\pi_{\mathfrak{a},\mathfrak{b}}\left(\mbox{Ad}_{g_B}\xi_k\right)$ for $k=1,2$. The first term in this expression annihilates because of Eq. \eqref{CondTangencyLambdamunu2}; the second an fourth also annihilates because from the second equation in Eq. \eqref{CondTangencyLambdamunu}, it follows that $\eta_k\in\mathfrak{b}^0,k=1,2$, and this subspace is invariant for the $B$-trasadjoint action $\mbox{Ad}^\sharp$. Moreover, by using Eq. \eqref{CondTangencyLambdamunu} again, the third term in the right hand side can be written as
\[
\pi_{\mathfrak{b}^0}\left(\mbox{Ad}^\sharp_{g_B}\eta_2+\mbox{ad}^\sharp_{\left(\xi_2\right)_B}\left(\mbox{Ad}^\sharp_{g_B}\sigma\right)\right)\left(\left(\xi_1\right)_A\right)=-\left(\mbox{Ad}^\sharp_{g_B}\sigma\right)\left(\left[\left(\xi_1\right)_A,\left(\xi_2\right)_A\right]\right),
\]
and so
\begin{multline*}
\left.\omega\right|_{\left(g,\sigma\right)}\left(\left(g,\sigma;\xi_1,\eta_1\right),\left(g,\sigma;\xi_2,\eta_2\right)\right)=\cr
=\pi_{\mathfrak{b}^0}\left(\mbox{Ad}^\sharp_{g_B}\sigma\right)\left(\left[\left(\xi_1\right)_A,\left(\xi_2\right)_A\right]\right)-\pi_{\mathfrak{a}^0}\left(\mbox{Ad}^\sharp_{g_B}\sigma\right)\left(\left[\left(\xi_1\right)_B,\left(\xi_2\right)_B\right]\right).
\end{multline*}
rom Eq. \eqref{EstrSimplecticaSobreProductoOrbitas} and using formula \eqref{DerivadaLmunu} for the derivative along the map $M_{\mu\nu}$, we have proved the proposition.
\end{proof}
Therefore the manifold $\mathcal{O}^A_\mu\times\mathcal{O}^B_\nu$ with the symplectic product structure $\omega_{\mu\nu}$ is symplectomorphic to the Marsden-Weinstein reduced space associated to the $A\times B$-action defined before on $G\times\mathfrak{g}^*$. As was said before, the hamiltonian $H\left(g,\sigma\right):=\frac{1}{2}\sigma\left(\sigma^\flat\right)$ is invariant by this action, so the solution curves of the dynamical system defined by such hamiltonian on $G\times\mathfrak{g}^*$ are in one-to-one correspondence with the solution curves on the system induced on $\mathcal{O}^A_\mu\times\mathcal{O}^B_\nu$ by the hamiltonian $H_{\mu\nu}$ \cite{A-M} defined through
\[
i_{\mu\nu}^*H=M_{\mu\nu}^*H_{\mu\nu}.
\]
Let us note now that if $M_{\mu\nu}\left(g,\sigma\right)=\left(\omega_1,\omega_2\right)$, then $\mbox{Ad}^\sharp_{g_B}\sigma=\omega_1+\omega_2$, and from here
\begin{align*}
H_{\mu\nu}\left(\omega_1,\omega_2\right)&=\frac{1}{2}\left(\mbox{Ad}_{g_B^{-1}}^\sharp\left(\omega_1+\omega_2\right)\right)\left[\left(\mbox{Ad}_{g_B^{-1}}^\sharp\left(\omega_1+\omega_2\right)\right)^\flat\right]\cr
&=\frac{1}{2}\left(\mbox{Ad}_{g_B^{-1}}^\sharp\left(\omega_1+\omega_2\right)\right)\left[\mbox{Ad}_{g_B^{-1}}\left(\omega_1+\omega_2\right)^\flat\right]\cr
&=\frac{1}{2}\omega_1\left(\omega_1^\flat\right)+\frac{1}{2}\omega_2\left(\omega_2^\flat\right)+\omega_1\left(\omega_2^\flat\right).
\end{align*}
Thus
\begin{multline*}
\left.\dif H_{\mu\nu}\right|_{\left(\omega_1,\omega_2\right)}\left(\pi_{\mathfrak{b}^0}\left(\mbox{ad}^\sharp_\xi\omega_1\right),\pi_{\mathfrak{a}^0}\left(\mbox{ad}^\sharp_\zeta\omega_2\right)\right)=\cr
=\pi_{\mathfrak{b}^0}\left(\mbox{ad}^\sharp_\xi\omega_1\right)\left(\omega_1^\flat+\omega_2^\flat\right)+\pi_{\mathfrak{a}^0}\left(\mbox{ad}^\sharp_\zeta\omega_2\right)\left(\omega_1^\flat+\omega_2^\flat\right)
\end{multline*}
and the hamiltonian vector field will be
\begin{equation}\label{CampoVectHmunu}
\left.X_{H_{\mu\nu}}\right|_{\left(\omega_1,\omega_2\right)}=\left(\pi_{\mathfrak{b}^0}\left(\mbox{ad}^\sharp_{\pi_{\mathfrak{a}}\left(\omega_1^\flat+\omega_2^\flat\right)}\omega_1\right),\pi_{\mathfrak{a}^0}\left(\mbox{ad}^\sharp_{\pi_{\mathfrak{b}}\left(\omega_1^\flat+\omega_2^\flat\right)}\omega_2\right)\right).
\end{equation}
Generally the solving of dynamical systems obtained through reduction proceed in the opposite direction to the used here: That is, once the reduced system is obtained, it is hoped that it can be more easy to solved than the original, because it has less degrees of freedom. After that, the solution for the original problem is obtained by a lifting. In AKS systems the direction in which we solved the problem is reversed: Let us note that the dynamical system on $G\times\mathfrak{g}^*$ defined through $H$ has as hamiltonian vector field
\[
\left.X_H\right|_{\left(g,\sigma\right)}=\left(\sigma^\flat,-\mbox{ad}^\sharp_{\sigma^\flat}\sigma\right)=\left(\sigma^\flat,0\right)
\]
given the condition $\mbox{ad}^\sharp_\xi\xi^\flat=0$ for all $\xi\in\mathfrak{g}$. Therefore the solution curve in the unreduced space passing through $\left(g,\sigma\right)$ at $t=0$ is
\[
t\mapsto\left(g\exp{t\sigma^\flat},\sigma\right),
\]
and if this initial data verifies $\pi_{\mathfrak{b}^0}\left(\mbox{Ad}^\sharp_g\sigma\right)=\mu,\pi_{\mathfrak{a}^0}\left(\sigma\right)=\nu$, then such curve remains in $\Lambda_{\mu\nu}$ for all $t$. Then the curve
\[
t\mapsto\left(\pi_{\mathfrak{b}^0}\left(\mbox{Ad}^\sharp_{\left(g_A\left(t\right)\right)^{-1}}\mu\right),\pi_{\mathfrak{a}^0}\left(\mbox{Ad}^\sharp_{g_B\left(t\right)}\nu\right)\right)
\]
(where $g_A:\mathbb{R}\rightarrow A,g_B:\mathbb{R}\rightarrow B$ are the curves uniquely defined by the equation $g_A\left(t\right)g_B\left(t\right)=g\exp{t\sigma^\flat}$) is a solution curve for the ODE associated to the vector field \eqref{CampoVectHmunu}, which is more difficult to solve than the original system.

\bibliographystyle{plain}

\end{document}